\def\btau{\boldsymbol{\tau}}
\def\bpsi{\boldsymbol{\psi}}
\def\bOmega{\boldsymbol{\Omega}}
\def\mbx{\mathbf{x}}
\def\mby{\mathbf{y}}
\def\mbA{\mathbf{A}}
\def\mbE{\mathbf{E}}
\def\mbM{\mathbf{M}}
\def\mbN{\mathbf{N}}
\def\mbP{\mathbf{P}}
\def\mbQ{\mathbf{Q}}
\def\mbR{\mathbf{R}}
\def\mbT{\mathbf{T}}
\def\mbU{\mathbf{U}}
\def\mbX{\mathbf{X}}
\def\mbY{\mathbf{Y}}
\def\mbZ{\mathbf{Z}}
\newtheorem{theorem}{Theorem}
\newtheorem{proposition}{Proposition}
\newtheorem{lemma}{Lemma}
\newtheorem{corollary}{Corollary}
\theoremstyle{definition}
\newtheorem{definition}{Definition}
\newcommand*{\rom}[1]{\expandafter\@slowromancap\romannumeral #1@}
\begin{document}
% Reduce spacing above and below equations
\setlength{\abovedisplayskip}{3pt}
\setlength{\belowdisplayskip}{3pt}

\title{From Quasi-Isometric Embeddings to Finite-Volume Property:\\ A Theoretical Framework for Quantized Matrix Completion
}

\author{Arian Eamaz, Farhang Yeganegi, and Mojtaba Soltanalian
%\author{\IEEEauthorblockN{Arian Eamaz}
%\IEEEauthorblockA{{University of Illinois, Chicago, IL 60607, USA}\\
%email: aeamaz2@uic.edu

%}
%\vspace{-20pt}
%\thanks{This work was supported in part by the National Science Foundation Grant CCF-1704401. The conference precursor to this work has been presented at the 2023 IEEE International Conference on Sampling Theory and Applications (SampTA) \cite{10301389}. (\emph{Corresponding author: Arian Eamaz})}
\thanks{The authors are with the Department of Electrical and Computer Engineering, University of Illinois Chicago, Chicago, IL 60607, USA (e-mail: \emph{ \{aeamaz2, fyegan2, msol\}@uic.edu}). The first two authors contributed equally to this work.}
}
%\markboth{Submitted to the IEEE TRANSACTIONS ON INFORMATION THEORY, 2025
%}
%{Shell \MakeLowercase{\textit{et al.}}: Bare Demo of IEEEtran.cls for IEEE Journals}
\maketitle

\begin{abstract}
We delve into the impact of \emph{memoryless scalar quantization} on matrix completion. Our primary motivation for this research is to evaluate the recovery performance of nuclear norm minimization in handling quantized matrix problems without the use of any regularization terms such as those stemming from maximum likelihood estimation. We broaden our theoretical discussion to encompass the coarse quantization scenario with a dithering scheme, where the only available information for low-rank matrix recovery is a few-bit low-resolution data.
We furnish theoretical guarantees for both scenarios: when access to dithers is available during the reconstruction process, and when we have access solely to the statistical properties of the dithers. Additionally, we conduct a comprehensive analysis of the effects of sign flips and prequantization noise on the recovery performance, particularly when the impact of sign flips is quantified using the well-known Hamming distance in the upper bound of recovery error.
\end{abstract}

\begin{IEEEkeywords}
Coarse quantization, matrix completion, nuclear norm minimization, one-bit sensing, time-varying thresholds.
\end{IEEEkeywords}

% Reduce spacing above and below equations
\setlength{\abovedisplayskip}{3pt}
\setlength{\belowdisplayskip}{3pt}

%\newpage
\section{Introduction}
\label{intro}
Matrix completion, the task of reconstructing an unknown low-rank matrix from partial data, presents a pervasive challenge in various practical domains, including collaborative filtering \cite{goldberg1992using}, system identification \cite{liu2010interior}, and sensor localization \cite{singer2008remark}. A significant insight has emerged: in the case of a rank-$r$ matrix $\mbX$, and when a certain level of ``structure'' is absent, a sparse, randomly selected subset of its elements can enable precise reconstruction. This groundbreaking discovery was initially validated by \cite{candes2009exact}, which meticulously analyzed a convex relaxation method introduced by \cite{fazel2002matrix}. An essential determinant of the feasibility of matrix completion is a specific incoherence measure, introduced in \cite{candes2009exact}. 

Quantization is a fundamental process in digital signal processing, converting continuous signals into discrete representations. However, achieving high-resolution quantization often demands a large number of quantization levels, which can lead to increased power consumption, higher manufacturing costs, and reduced sampling rates in analog-to-digital converters (ADCs). To address these challenges, researchers have explored the use of reduced quantization bits, including the extreme case of one-bit quantization, where signals are compared with a fixed threshold at the ADCs, yielding binary outputs \cite{jacques2013robust,boufounos2015quantization}. This approach allows for high-rate sampling while decreasing implementation costs and energy consumption compared to multi-bit ADCs. One-bit ADCs have proven highly valuable in various applications, including MIMO systems \cite{kong2018multipair,mezghani2018blind}, channel estimation \cite{li2017channel}, and array signal processing \cite{liu2017one}.

Scalar quantization with \emph{dithering} is a technique that involves intentionally introducing random noise to an input signal before quantization \cite{gray2002dithered}. This method has a well-established presence in both practical applications, where it can lead to more enhanced reconstructions, and theoretical contexts, where it often yields favorable statistical properties of the quantization noise. This approach is widely acknowledged and referenced in the literature \cite{schuchman1964dither,vanderkooy1987dither,wagdy1989validity,carbone1997quantitative,ali202012}. More recently, dithered quantization has found applications in the realm of high-dimensional structured signal recovery from quantized linear measurements, as demonstrated in various domains, including sparse parameter estimation \cite{dabeer2006signal,thrampoulidis2020generalized}, compressed sensing \cite{xu2020quantized,dirksen2021non,eamaz2024harnessing,dirksen2022sharp,dirksen2020one}, phase retrieval \cite{eamaz2022phase}, covariance recovery \cite{dirksen2022covariance,eamaz2021modified,eamaz2023covariance,eamaz2022covariance,dirksen2023tuning,chen2023high,chen2023parameter} and sampling theory \cite{eamaz2022uno}.

\subsection{Related Works}
There is a rich body of literature addressing various inverse problems under quantized measurements, most notably in the context of compressed sensing. These studies primarily focus on establishing theoretical guarantees that provide insights into the recovery error and the required number of samples. In \cite{plan2014dimension}, the authors introduced the concept of \emph{random hyperplane tessellations} to build a probabilistic embedding between the Hamming distance and the directional recovery error. Subsequently, \cite{oymak2015near} characterized the trade-off between distortion and sample complexity in random hyperplane tessellations using Gaussian complexity, addressing both arbitrary and structured signal sets. However, these approaches suffer from two major limitations: they rely on a ditherless quantization scheme limited to direction-only estimation, and they assume Gaussian measurement matrices exclusively.  

Recent works have demonstrated that, under suitable conditions, complete signal reconstruction is possible by introducing nonzero random thresholds or dithering during quantization \cite{knudson2016one,baraniuk2017exponential,xu2020quantized}. The use of uniform dithering has opened new theoretical and practical opportunities, effectively extending the direction-only estimation framework of the ditherless setup to one that also allows amplitude estimation and a more accurate recovery of the original signal. For instance, \cite{jacques2017small} proposed quasi-isometric embeddings achieved with high probability through scalar (dithered) quantization following a linear random projection. In such embeddings, both multiplicative and additive distortions coexist when distances between mapped vectors are measured using the $\ell_1$-norm.  

In the Gaussian setting, \cite{baraniuk2017exponential} suggested incorporating either adaptive or random dithers before binarization of the compressive measurements. More recently, \cite{xu2020quantized} extended these ideas beyond Gaussian measurements, introducing a scalar quantization scheme with uniformly distributed dithering. Their analysis provided theoretical guarantees for measurement matrices satisfying the restricted isometry property (RIP) and, notably, did not rely on the high-resolution assumption, making the results valid for a wide range of bit-rates. In \cite{derezinski2022sharp}, random hyperplane tessellations were further refined to establish embeddings between the Hamming distance and the recovery error in the presence of uniform dithering, though their analysis was limited to subgaussian matrices and lacked guarantees for deterministic measurements.  

In another line of work, \cite{thrampoulidis2020generalized} addressed parameter estimation under dithered quantization by reformulating the problem as a generalized LASSO, thereby leveraging existing theoretical guarantees from that framework. In the broader context, the search for Johnson–Lindenstrauss (JL) style isometric mappings has been central to understanding how random sensing matrices, such as subgaussian matrices with unit variance, can preserve pairwise distances between data points up to a multiplicative distortion \cite{dasgupta2003elementary,vershynin2010introduction}. In other words, for random matrix sensing $\mbA\in\mathbb{R}^{m\times n}$ and all $(\mbx,\mby)\in\mathcal{K}$, there exists an $\varepsilon$-isometry between two metric spaces $\left(\mathcal{K},\ell_p\right)$ and $\left(\mbA\mathcal{K},\ell_q\right)$ with a high probability
\begin{equation}
\alpha(\varepsilon)\left\|\mbx-\mby\right\|_p \leq\left\|\mbA\mbx-\mbA\mby\right\|_q\leq \beta(\varepsilon)\left\|\mbx-\mby\right\|_p,
\end{equation}
where $\alpha(\cdot)$ and $\beta(\cdot)$ are functions depending on $m$. 

One well-known example of such an isometry is the RIP \cite{candes2008introduction}, which ensures the existence of accurate recovery solutions for sparse signals. The study of these isometries across various random mappings has provided valuable insights into the existence and stability of solutions, the recovery error, and the sample complexity required for reliable reconstruction. Motivated by these developments, similar efforts have been made in the context of quantized measurements, aiming to identify analogous isometric structures within quantized settings. In the ditherless scenario, a quasi-isometric form can be established; however, it includes an additional constant $\Delta$ in the upper bound, which is ideally expected to be controllable through the number of samples \cite{boufounos2013efficient,boufounos2015quantization,jacques2015quantized}:
\begin{equation}
(1-\varepsilon)\left\|\mbx-\mby\right\|_2-\Delta\leq\frac{1}{\sqrt{m}}\left\|\mathcal{Q}_{\Delta}(\mbA\mbx)-\mathcal{Q}_{\Delta}(\mbA\mby)\right\|_2\leq (1+\varepsilon)\left\|\mbx-\mby\right\|_2+\Delta,
\end{equation}
with $\mathcal{Q}_{\Delta}(\cdot)$ being scalar quantizer. Since the quantizer is a discontinuous function, quantization introduces abrupt jumps whenever a projected value crosses a quantization threshold. To analyze these discontinuities, one can either employ a softening strategy \cite{plan2014dimension,jacques2017time,jacques2017small} for the quantization operator or bound the number of discontinuous components using deterministic properties of the quantizer \cite{boufounos2017representation,xu2020quantized}.

In particular, when the quantization process incorporates uniform dithering and exploits its inherent statistical properties, for subgaussian sensing measurements and constants $\kappa$, $K$, and $c$, the following quasi-isometric embedding holds with a high probability \cite{jacques2017small}:
\begin{equation}
\label{jaack}
\begin{aligned}
\left|\frac{1}{m}\left\|\mathcal{Q}_{\Delta,\tau}(\mbA\mbx)-\mathcal{Q}_{\Delta,\tau}(\mbA\mby)\right\|_1-\left(\frac{2}{\pi}\right)^{\frac{1}{2}}\left\|\mbx-\mby\right\|_2\right|
\leq\left(\varepsilon+\frac{\kappa}{\sqrt{K}}\right)\left\|\mbx-\mby\right\|_2+c \varepsilon \Delta,
\end{aligned}
\end{equation}
where $\mathcal{Q}_{\Delta,\tau}(\cdot)$ is the dithered quantizer with uniform dithers $\tau$. As can be observed, due to the combined effects of subgaussian measurements and uniform dithering, the JL-style embedding transitions from an $\ell_2 / \ell_2$ form to an $\ell_1 / \ell_2$ embedding. In this case, the structure no longer exhibits a strict $\varepsilon$-isometry, as the differing statistical properties of the measurements inherently influence the form of the resulting quasi-isometry.

More than isometry-style theoretical guarantees, in \cite{eamaz2024harnessing}, we introduced the Finite Volume Property (FVP) as a new framework to analyze one-bit sensing with dithered quantization. The FVP characterizes how the collection of one-bit inequalities forms a finite-volume polyhedron enclosing the true signal. As the number of samples increases, the average distance between the signal and these hyperplanes converges to its mean, and the finite volume around the signal shrinks, ensuring that the reconstructed solution lies within a small ball centered at the true signal. Mathematically, the FVP is formalized through a concentration inequality on this average distance, establishing a quasi-isometric embedding between the signal space and its one-bit measurements under isotropic sampling matrices. This result provides an explicit upper bound on the recovery error and determines the number of one-bit samples required for accurate reconstruction. The importance of the FVP lies in its conceptual and practical impact: unlike traditional random-hyperplane tessellation results, it views one-bit sensing as a linear feasibility problem and yields uniform recovery guarantees that extend beyond random or Gaussian sampling to include deterministic matrices. 

One of the less explored areas in the theoretical one-bit sensing is \emph{quantized matrix completion}, where the objective is to recover a low-rank matrix from quantized observations. Initial attempts to address quantized matrix completion can be found in \cite{davenport20141} and \cite{bhaskar20151}, where researchers developed theoretical guarantees within the framework of the generalized linear model. They derived a regularized maximum likelihood estimate (MLE) based on a probability distribution determined by the real-valued noisy entries of the low-rank matrix. To regularize the MLE problem, these studies employed both nuclear and Frobenius norms, drawing inspiration from prior work on one-bit compressed sensing \cite{davenport20141}. The optimization of the regularized MLE was carried out using projected gradient descent.

The results of \cite{davenport20141} and \cite{bhaskar20151} presented various theoretical guarantees and necessary conditions for achieving perfect recovery performance in the context of the MLE problem. These guarantees were established by considering the rank and the number of measurements required, ensuring optimal solutions when the density function is concave. In the realm of the one-bit matrix completion problem, a comprehensive investigation of a max-norm constrained MLE was conducted in \cite{cai2013max}. Further advancements were made in \cite{ni2016optimal}, where the authors developed a greedy algorithm that extended the concept of conditional gradient descent to efficiently solve the regularized MLE for the one-bit matrix completion problem. The concept of regularized MLE for the one-bit matrix completion problem was extended to the quantized matrix completion problem in \cite{bhaskar2016probabilistic}. Here, the authors of \cite{cao2015categorical} considered a trace-norm regularized MLE with a likelihood function for categorical distributions. In \cite{gao2018low}, a regularized MLE for matrix completion from quantized and erroneous measurements was proposed, accounting for the presence of sparse additive errors in the model.

\subsection{Motivations}
In \cite{davenport20141}, the authors derived the MLE while incorporating time-varying thresholds, which correspond to random dithering, in the context of noisy measurements. However, it is crucial to recognize that the design of these time-varying thresholds plays a pivotal role in one-bit sensing, as highlighted in \cite{baraniuk2017exponential,eamaz2022uno,eamaz2022phase,eamaz2022covariance, xu2020quantized}, and can significantly enhance signal reconstruction performance. Nonetheless, relying on noise as our source of dithering, as demonstrated in \cite{davenport20141}, confines us to thresholds that mimic the behavior of the noise. Furthermore, when dealing with non-convex distributions, the uniqueness of the solution in the MLE problem cannot be guaranteed. To make matrix recovery feasible, certain assumptions must be imposed on the noise distribution, as discussed in \cite{davenport20141, cai2013max, bhaskar20151, gao2018low, ni2016optimal}. In practical scenarios, it is often unrealistic to assume that the noise adheres to a distribution with characteristics that guarantee efficient recovery in the context of the regularized MLE problem. In fact, in many real-world scenarios, the distribution of noise remains unknown.

This paper is motivated by the desire to investigate the impact of scalar quantization broadly defined on low-rank matrix completion, without being restricted to any specific problem formulation. Our goal is to derive an isometric embedding for this setting that holds independently of the particular reconstruction approach or algorithm employed, such as nuclear norm minimization or its regularized variants (e.g., MLE or other forms of additional regularization).

\subsection{Contribution}
In this paper, we propose a quasi-isometric embedding for matrix completion under uniform sampling and dithered quantization (Proposition~1). Building on this result, we derive a concrete upper bound on the recovery error and establish its decay rate with respect to the number of available samples (Theorem~1). Furthermore, we develop FVP-style guarantees for one-bit matrix completion, where the one-bit samples define a linear feasible system with a mean-square-error–like criterion. This formulation provides a continuous objective that avoids the softening or non-differentiability issues inherent to discontinuous quantizers. The corresponding concentration results and proofs are considerably simpler and are presented in Theorem~3 and Theorem~4.

\subsection{Notation}
Throughout this paper, we use bold lowercase and bold uppercase letters for vectors and matrices, respectively.  We represent a vector $\mathbf{x}$ and a matrix $\mbX$ in terms of their elements as $\mathbf{x}=[x_{i}]$ and $\mathbf{X}=[X_{i,j}]$, respectively. The set of real numbers is $\mathbb{R}$. For vectors, we define $\mathbf{x}\succeq \mathbf{y}$ as a component-wise inequality between vectors $\mathbf{x}$ and $\mathbf{y}$, i.e., $x_{i}\geq y_{i}$ for every index $i$. For matrices, $\mbX\succeq\mbY$ implies that $\mbX-\mbY$ is a positive semi-definite matrix. The function $\textrm{diag}(.)$ returns the diagonal elements of the input matrix. The nuclear norm of a matrix $\mbX\in \mathbb{R}^{n_1\times n_2}$ is denoted $\left\|\mbX\right\|_{\star}=\sum^{r}_{i=1}\sigma_{i}$ where $r$ and $\left\{\sigma_{i}\right\}$ are the rank and singular values of $\mbX$, respectively. The Frobenius norm of a matrix $\mathbf{X}\in \mathbb{R}^{n_1\times n_2}$ is defined as $\|\mathbf{X}\|_{\mathrm{F}}=\sqrt{\sum^{n_1}_{r=1}\sum^{n_2}_{s=1}\left|x_{rs}\right|^{2}}$, where $x_{rs}$ is the $(r,s)$-th entry of $\mathbf{X}$. We also define $\|\mbX\|_{\mathrm{max}}=\sup_{i,j}|X_{i,j}|$. The $\ell_{p}$-norm of a vector $\mathbf{x}$ is $\|\mathbf{x}\|_{p}=\left(\sum_{i}x^{p}_{i}\right)^{1/p}$. 
The Hadamard (element-wise) products is $\odot$. The diameter of a bounded set $\mathcal{K}\subset \mathbb{R}^{n}$ is written as $\left\|\mathcal{K}\right\|=\sup_{\mbx\in\mathcal{K}} \|\mbx\|_2$. The vectorized form of a matrix $\mbX$ is written as $\operatorname{vec}(\mbX)$. The $\ell_1$-norm for a matrix $\mbX$ means $\|\mbX\|_1=\|\operatorname{vec}(\mbX)\|_1$. For an event $\mathcal{E}$, $\mathbb{I}[\mathcal{E}]$ is the indicator function for that event meaning that $\mathbb{I}[\mathcal{E}]$ is $1$ if $\mathcal{E}$ occurs; otherwise, it is zero. The set $[n]$ is defined as $[n]=\left\{1,\cdots,n\right\}$. The function $\operatorname{sgn}(\cdot)$ yields the sign of its argument. The Hamming distance between $\operatorname{sgn}(\mbx),\operatorname{sgn}(\mby)\in\{-1,1\}^n$ 
%and $\operatorname{sgn}(\mby)\in\{-1,1\}^n$ 
is defined as 
\begin{equation}
\label{hamming}
d_{\mathrm{H}}(\operatorname{sgn}(\mbx),\operatorname{sgn}(\mby))=\sum_{i=1}^{n}\mathbb{I}_{(\operatorname{sgn}(x_i)\neq \operatorname{sgn}(y_i))}.
\end{equation}
The function $\log(\cdot)$ denotes the natural logarithm, unless its base is otherwise stated. The notation $x \sim \mathcal{U}_[a,b]$ means a random variable drawn from the uniform distribution over the interval $[a,b]$. The Kolmogorov $r$-entropy of a set $\mathcal{K}$ is denoted by $\mathcal{H}\left(\mathcal{K},r\right)$ defined as
the logarithm of the size of the smallest $r$-net of $\mathcal{K}$ \cite{kolmogorov1959varepsilon}. The subgaussian norm of a random variable $X$ is characterized by 
\begin{equation}
\|X\|_{\psi_2}=\inf \left\{t>0: \mathbb{E}e^{X^2/t^2} \leq 2\right\}
\end{equation}
The sub-exponential norm of a random variable $X$ is characterized by
\begin{equation}
\|X\|_{\psi_1}=\inf \{t>0: \mathbb{E} e^{|X| / t} \leq 2\}.
\end{equation}

\section{Quasi-Isometry for Quantized Matrix Completion}
\label{OB-MC}
This section commences with an introduction to scalar quantization and its established variants in the literature. Subsequently, we delve into a crucial property of uniform quantization (scalar quantization with uniform dithering), which plays a pivotal role in our theoretical guarantees. Finally, we present the guarantees for quantized matrix completion.

\subsection{Scalar Quantization}
\label{ss_1}
The memoryless scalar quantizer
\begin{equation}
\mathcal{Q}_{\Delta}: \mathbb{R} \rightarrow \mathcal{A}_K.
\end{equation}
is defined as
\begin{equation}
\label{a4}
\mathcal{Q}_{\Delta}(x)=\Delta\left(\left\lfloor\frac{x}{\Delta}\right\rfloor+\frac{1}{2}\right),
\end{equation}
where $\Delta$ is resolution parameter and $\mathcal{A}_K$ is the finite alphabet set given by
\begin{equation}
\label{a2}
\mathcal{A}_K:=\left\{ \pm  \frac{k \Delta}{2}: 0 \leq k \leq K, k \in \mathbb{Z}\right\}.
\end{equation}
When we introduce a uniform dither generated as $\tau\sim \mathcal{U}_{\left[-\frac{\Delta}{2},\frac{\Delta}{2}\right]}$, to the input signal of the quantizer, the resulting quantization process is termed \emph{uniform quantization}. This process can be defined as follows:
\begin{equation}
\label{a1}
\mathcal{Q}_{\Delta, \tau}\left(x\right)=\Delta\left(\left\lfloor\frac{x+\tau}{\Delta}\right\rfloor+\frac{1}{2}\right).
\end{equation}
Random dithering is realized through a randomly (usually, Gaussian and Uniform) dithered generator within the
ADC \cite{robinson2019analog}. The source of this Uniform dither is a low-cost thermal noise diode, which may require additional circuitry and amplifiers to enhance the noise levels; see, for instance, \cite{ali2020background} for the implementation of multiple dithering in a 12-bit, 18 gigasamples per second (GS/s) ADC.

When quantizing a scalar $x$, it is essential to recognize that the uniform quantizer effectively becomes a 1-bit quantizer (scaled appropriately) when the resolution parameters exceed the magnitude of the signal: 
\begin{equation}
\label{a5}
\mathcal{Q}_{\Delta}\left(x\right)=\frac{\Delta}{2} \operatorname{sgn}(x), \quad |x|<\Delta.
\end{equation}
This remains true when the quantizer is associated with a uniform dither
\begin{equation}
\label{a7}
\mathcal{Q}_{\Delta, \tau}\left(x\right)=\frac{\Delta}{2} \operatorname{sgn}(x+\tau), \quad|x|<\frac{\Delta}{2}. 
\end{equation}
One intriguing property of the uniform quantizer is its ability to offset the quantization impact on average, as articulated in the following lemma:
\begin{lemma}[\cite{thrampoulidis2020generalized}]
\label{lemma_1}
Let $\tau$ be a random dither distributed according to $\tau\sim \mathcal{U}_{\left[-\frac{\Delta}{2},\frac{\Delta}{2}\right]}$. Then for a fixed $x\in \mathbb{R}$, we have    
\begin{equation}
\label{a8}   
\mathbb{E}\mathcal{Q}_{\Delta, \tau}\left(x\right) = x.
\end{equation}
\end{lemma}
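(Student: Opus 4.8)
\emph{Proof plan.} The plan is to evaluate $\mathbb{E}\{\mathcal{Q}_{_{\Delta,\tau}}(x)\}$ directly as an integral against the uniform density of $\tau$, which is $\tfrac{1}{\Delta}$ on $[-\tfrac{\Delta}{2},\tfrac{\Delta}{2}]$. Using the definition \eqref{a1}, the factors of $\Delta$ and $\tfrac{1}{\Delta}$ cancel, so that
\begin{equation}
\mathbb{E}\left\{\mathcal{Q}_{_{\Delta,\tau}}(x)\right\}
=\int_{-\Delta/2}^{\Delta/2}\left(\left\lfloor\frac{x+\tau}{\Delta}\right\rfloor+\frac{1}{2}\right)\mathrm{d}\tau
=\frac{\Delta}{2}+\int_{-\Delta/2}^{\Delta/2}\left\lfloor\frac{x+\tau}{\Delta}\right\rfloor\mathrm{d}\tau,
\end{equation}
where the additive constant $\tfrac12$ contributes the term $\tfrac{\Delta}{2}$. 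The whole task thus reduces to evaluating the integral of the floor term over a window of length exactly $\Delta$.

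For that I would write the floor in terms of the fractional part, $\lfloor t\rfloor=t-\psi(t)$ with $\psi(t):=t-\lfloor t\rfloor\in[0,1)$, applied at $t=\tfrac{x+\tau}{\Delta}$. The linear part integrates by symmetry: $\int_{-\Delta/2}^{\Delta/2}\tfrac{x+\tau}{\Delta}\,\mathrm{d}\tau=x$, since $\int_{-\Delta/2}^{\Delta/2}\tau\,\mathrm{d}\tau=0$. It then remains to show that the fractional-part correction integrates to exactly $\tfrac{\Delta}{2}$, i.e.
\begin{equation}
\int_{-\Delta/2}^{\Delta/2}\psi\!\left(\frac{x+\tau}{\Delta}\right)\mathrm{d}\tau=\frac{\Delta}{2}.
\end{equation}

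The crux — and the only step requiring any care — is this last identity. As $\tau$ sweeps the interval $[-\tfrac{\Delta}{2},\tfrac{\Delta}{2}]$ of length $\Delta$, the argument $\tfrac{x+\tau}{\Delta}$ sweeps an interval of length exactly $1$; because $\psi$ is $1$-periodic with mean $\tfrac12$ over any full period, the change of variable $w=\tfrac{x+\tau}{\Delta}$ turns the integral into $\Delta\int_{s-1/2}^{s+1/2}\psi(w)\,\mathrm{d}w$ with $s=x/\Delta$, and the inner integral equals $\tfrac12$ regardless of $s$. If one prefers to avoid the periodicity argument, the same value follows by splitting the window at the unique point where $\tfrac{x+\tau}{\Delta}$ crosses an integer, on whose two sides the floor is piecewise constant, and adding the two contributions. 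Combining the three pieces then yields $\mathbb{E}\{\mathcal{Q}_{_{\Delta,\tau}}(x)\}=\tfrac{\Delta}{2}+x-\tfrac{\Delta}{2}=x$, as claimed.
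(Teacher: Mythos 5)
Your proof is correct. Note that the paper does not actually prove this lemma at all --- it imports it by citation from \cite{thrampoulidis2020generalized} --- so there is no internal argument to compare against; your direct computation is a valid, self-contained verification. The decomposition $\lfloor t\rfloor = t - \psi(t)$ with $\psi$ the $1$-periodic fractional part, together with the observation that $\tau$ sweeping a window of length $\Delta$ makes $\tfrac{x+\tau}{\Delta}$ sweep exactly one period so that $\int \psi = \tfrac12$ independently of $x$, is exactly the standard dithering argument, and the three pieces $\tfrac{\Delta}{2} + x - \tfrac{\Delta}{2}$ assemble correctly. The only point worth a half-sentence of care in a written-up version is that the ``unique integer crossing'' in your alternative splitting argument is unique only up to the measure-zero boundary case where $\tfrac{x}{\Delta}-\tfrac12$ is itself an integer, which does not affect the integral.
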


\subsection{Quantized Matrix Completion}
\label{ss_2}
Assume we apply the coarse quantization to the observed partial entries of a low-rank matrix 
$\mbX\in\mathbb{R}^{n_1\times n_2}$ of rank $r$. Define $\mathcal{P}_{\Omega}\left(\mbX\right)=\left[X_{i,j}\right]$ to be the orthogonal projector onto the span of matrices vanishing outside of $\Omega$ with the cardinality $m$. In quantized matrix completion, we solely observe the partial matrix through the quantized data as below:
\begin{equation}
\label{a6}
\begin{aligned}
Q_{i,j} &= \begin{cases} \mathcal{Q}_{\Delta}\left(X_{i,j}\right) & (i,j)\in \Omega,\\ 0 & \text{otherwise}.
\end{cases}
\end{aligned}
\end{equation}
In our setting, similar to previous works  \cite{candes2010matrix,candes2011tight,davenport20141}, we employ uniform sampling for matrix completion. Sensing and reconstructing matrices from a limited number of noisy entries is a captivating and ongoing area of research that has garnered significant attention. In the work presented by the authors in \cite{candes2010matrix}, %they established the concept of 
noisy matrix completion is formulated as a %problem of minimizing the 
nuclear norm minimization problem. This approach has led to the derivation of rigorous theoretical guarantees, further advancing the understanding and development of this intriguing field. Consider the noisy measurements as follows:
\begin{equation}
X^{(n)}_{i,j} = X_{i,j}+Z_{i,j},\quad (i,j)\in \Omega,
\end{equation}
where $Z_{i,j}$ is a bounded additive noise. 

Extensive investigations conducted in \cite{candes2010matrix,cai2010singular} have demonstrated that matrix completion with noise can be formulated as a nuclear norm minimization problem as follows:
\begin{equation}
\label{Steph1}
\begin{aligned}
&\underset{\mbX}{\textrm{minimize}}\quad \left\|\mbX\right\|_{\star}\\
&\text{subject to} \quad \left\|\mathcal{P}_{\Omega}\left(\mbX-\mbX^{(n)}\right)\right\|_{\mathrm{F}} \leq \delta,
\end{aligned}
\end{equation}
where $\mbX^{(n)}$ is the noisy matrix and $\delta$ presents the effect of noise. Drawing inspiration from the theoretical guarantees of noisy matrix completion, we examine the problem of quantized matrix completion, specifically focusing on the case of one-bit matrix completion. Let us assume that $\mbQ = \mathcal{Q}_{\Delta}\left(\mathcal{P}_{\Omega}\left(\mbX\right)\right)\in\mathcal{A}^{n_1\times n_2}_{K}$ represents a scalar quantization of known entries of low-rank matrix $\mbX$, where only entries of $(i,j)\in\Omega$ are quantized, and the remaining entries become zero. Consequently, the quantized measurements can be expressed as follows:
\begin{equation}
\label{Steph2}
\mbQ = \mathcal{P}_{\Omega}\left(\mbX\right)+\mbN,
\end{equation}
where the matrix $\mbN\in\mathbb{R}^{n_1\times n_2}$ presents the effect of quantization as the additive noise matrix. Therefore, the nuclear norm minimization problem %for 
associated with the quantized matrix completion is given by
\begin{equation}
\label{Steph3}
\begin{aligned}
&\underset{\mbX}{\textrm{minimize}}\quad \left\|\mbX\right\|_{\star}\\
&\text{subject to} \quad \left\|\mathcal{P}_{\Omega}(\mbX)-\mbQ\right\|_{\mathrm{F}} \leq \delta,
\end{aligned}
\end{equation}
where the parameter $\delta$ denotes the impact of the quantization process.

Drawing inspiration from \cite[Theorem~7]{candes2010matrix} and \cite{candes2011tight} in the context of noisy matrix completion problem, we can derive an upper bound for the Frobenius norm error in quantized matrix completion through the nuclear norm minimization problem. This result is stated in the following theorem:
\begin{theorem}\cite[Theorem~7]{candes2010matrix}
\label{St_theorem1}
For the uniform set $\Omega$ and $\delta$, a parameter presenting the effect of quantization, define $m$ as the cardinality of the set $\Omega$. The error norm $\left\|\mbX-\bar{\mbX}\right\|_{\mathrm{F}}$ between the fixed matrix $\mbX$ and the reconstructed matrix $\bar{\mbX}$ by the quantized matrix completion is bounded with a probability of at least $1-\left(\max \left(n_1, n_2\right)\right)^{-3}$ as  
\begin{equation}
\left\|\mbX-\bar{\mbX}\right\|_{\mathrm{F}} \leq 4 \sqrt{\frac{\left(2 n_1 n_2+m\right) \min \left(n_1, n_2\right)}{m}} \delta+2 \delta.
\end{equation}
\end{theorem}
However, this bound for the quantized measurements presents several limitations that motivate us to explore alternative approaches to this problem. First, in recovery problems, we typically seek bounds that provide a meaningful relationship with the number of samples, as such bounds help determine the required sample size to achieve acceptable reconstruction performance. In this case, when adapting the noisy matrix completion problem to the quantized setting, it becomes evident that the upper bound 
$\delta$ is directly related to the number of samples. This implies that increasing the number of samples may actually lead to a deterioration in the upper bound of the recovery error. We seek a bound that improves as the number of samples increases, which is a natural and desirable property in coarse quantization scenarios where sample abundance can enhance recovery performance \cite{eamaz2024harnessing, eamaz2022phase, eamaz2023one}.

Moreover, the existing bound is derived for a fixed low-rank matrix, whereas in quantization schemes that leverage uniform dithering, we have this opportunity to establish universal results that hold for all low-rank matrices.

All these considerations motivate us to pursue alternative approaches from the literature, such as isometric embeddings, which leverage the randomness introduced by uniform dithering. The unique statistical properties of uniform dithers can yield remarkably strong results, as explored in the following section.
\subsection{Quasi-Isometric Quantized Embedding}
\label{ss_3}
Let $\mathcal{K}^r$ denote the set of low-rank matrices. In this section, we establish an embedding between the metric spaces $\left(\mathcal{K}^r\subset \mathbb{R}^{n_1\times n_2},\ell_1\right)$ and $\left(\mathcal{Q}\left(\mathcal{P}_{\Omega}\left(\mathcal{K}^r\right)\right)\subset \mathcal{A}^{n_1\times n_2}_{K}, \ell_1\right)$, where $\mathcal{Q}\left(\mathcal{P}_{\Omega}\left(\mathcal{K}^r\right)\right)$ represents the space of quantized matrix completion. This embedding aids in achieving the recovery performance in quantized matrix completion with high probability. To establish this embedding, we employ the following analytical tools:
\begin{definition}
\label{def_1}
Define a low-rank matrix as $\mbX=[X_{i,j}]\in\mathbb{R}^{n_1\times n_2}$ and the dither matrix by $\mbT=\left[\tau_{i,j}\right]\in \mathbb{R}^{n_1\times n_2}$. The consistency property of uniform quantization over the pair $\mbX,\mbY\in\mathcal{K}^r$, is given by
\begin{equation}
\label{a_1}
\mathcal{Q}_{\Delta, \mbT}\left(\mathcal{P}_{\Omega}\left(\mbX\right)\right)=\mathcal{Q}_{\Delta, \mbT}\left(\mathcal{P}_{\Omega}\left(\mbY\right)\right).
\end{equation}
\end{definition}
The concept of consistent reconstruction, as defined in Definition~\ref{def_1}, has played a pivotal role in obtaining theoretical guarantees in the field of one-bit compressed sensing \cite{jacques2013robust, xu2020quantized, eamaz2024harnessing, baraniuk2017exponential} and one-bit low-rank matrix sensing \cite{eamaz2024harnessing, foucart2019recovering}, as discussed in the review of prior literature.

For $(\mbX,\mbY)\in\mathcal{K}^r$, the distance used in the embedding $\ell_1$ is defined as
\begin{equation}
\label{n1} 
\mathcal{D}\left(\mbX,\mbY\right) \triangleq \frac{1}{m} \left\|\mathcal{Q}_{\Delta, \mbT}\left(\mathcal{P}_{\Omega}\left(\mbX\right)\right)-\mathcal{Q}_{\Delta, \mbT}\left(\mathcal{P}_{\Omega}\left(\mbY\right)\right)\right\|_1.
\end{equation}
An interesting aspect of the $\ell_1$ distance is that when we employ uniform dithering in our scheme, the impact of quantization can be mitigated by taking the expectation, as shown in the following lemma \cite{xu2020quantized}:
\begin{lemma}
\label{Jack}
For real values $a, b\in\mathbb{R}$ and $\tau\sim\mathcal{U}_{[-\frac{\Delta}{2},\frac{\Delta}{2}]}$, we have
\begin{equation}
\mathbb{E}_{\tau}\left|\mathcal{Q}(a+\tau)-\mathcal{Q}(b+\tau)\right| = \left|a-b\right|.
\end{equation}
\end{lemma}
Thus, we can evaluate the mean of the distance as stated in this lemma:
\begin{lemma}
\label{yuta}
If we have the uniform sampling in the matrix completion problem, for a pair $\left(\mbX,\mbY\right)\in\mathcal{K}^r$, and $[\mbT]_{i,j}\sim\mathcal{U}_{[-\frac{\Delta}{2},\frac{\Delta}{2}]}$, the following relation is obtained:
\begin{equation}
\mathbb{E}_{\tau,(i,j)}\left\|\mathcal{Q}_{\Delta, \mbT}\left(\mathcal{P}_{\Omega}\left(\mbX\right)\right)-\mathcal{Q}_{\Delta, \mbT}\left(\mathcal{P}_{\Omega}\left(\mbY\right)\right)\right\|_1 = \frac{m}{n_1n_2}\left\|\mbX-\mbY\right\|_1.
\end{equation}
\end{lemma}
\begin{IEEEproof}
The proof of Lemma~\ref{yuta}, is straightforward by considering Lemma~\ref{Jack} for the expectation over the dither values. Due to the fact that we use the uniform sampling in the matrix completion, the expectation over the indices $(i,j)\in\Omega$ is readily given by
\begin{equation}
\mathbb{E}_{(i,j)}\sum_{(i,j)\in\Omega}\left|X_{i,j}-Y_{i,j}\right| = \sum_{(i,j)\in\Omega}\sum_{(i,j)\in[n_1]\times[n_2]}\frac{1}{n_1n_2}\left|X_{i,j}-Y_{i,j}\right| = \frac{m}{n_1n_2}\left\|\mbX-\mbY\right\|_1.  
\end{equation}
\end{IEEEproof}
Using uniform sampling in matrix completion and integrating uniform dithering into the quantization process, we find that taking the expectation effectively cancels out the effects of both the sensing process and the quantization. This finding shows that the quantized data retains the statistical information from the high-resolution measurements.

We can express the $\ell_1$ distance between quantized values using the following indicator function:
\begin{equation}
\left|\mathcal{Q}\left(a\right)-\mathcal{Q}\left(b\right)\right| =  \Delta\sum_{k\in\mathbb{Z}} \mathbb{I}\left[\mathcal{E}(a-k\Delta,b-k\Delta)\right],  
\end{equation}
where 
\begin{equation}
\mathcal{E}(a, b):=\{\operatorname{sgn}(a) \neq \operatorname{sgn}(b)\}.
\end{equation}
Authors of \cite{jacques2017small} show that the $\ell_1$ distance can be smoothed by introducing $t$:
\begin{equation}
\label{e1}
d^t(a, b):=\Delta \sum_{k \in \mathbb{Z}} \mathbb{I}\left[\mathcal{F}^t(a-k \Delta, b-k \Delta)\right] \in \Delta \mathbb{N}.
\end{equation}
where
\begin{equation}
\label{def_f_t}
\mathcal{F}^t(a-k\Delta,b-k\Delta)\triangleq \left\{a-k\Delta>t, b-k\Delta\leq-t\right\}\cup\left\{a-k\Delta<-t, b-k\Delta\geq t\right\}.
\end{equation}
Therefore, we have
\begin{equation}
\mathcal{D}^t(\mbX, \mbY)=\frac{1}{m} \sum_{(i,j)\in\Omega} d^t\left(X_{i,j}+\tau_{i,j}, Y_{i,j}+\tau_{i,j}\right).
\end{equation}
The pseudo-distance $\mathcal{D}^t$
can both upper and lower bound the $\ell_1$ distance as follows:
\begin{equation}
\mathcal{D}^{|t|}(\mbX, \mbY) \leq \mathcal{D}(\mbX, \mbY) \leq \mathcal{D}^{-|t|}(\mbX, \mbY) .
\end{equation}
This distance has some interesting properties which will be used in our theorems \cite[Lemma~1]{jacques2017small}:
\begin{equation}
\label{prop}
\begin{aligned}
\left|d^t(a, b)-d^s(a, b)\right| & \leq 4(\Delta+|t-s|), \\
\left|d^t(a, b)-|a-b|\right| & \leq 4(\Delta+|t|) .
\end{aligned}
\end{equation}
Another crucial tool for our proof sketches of the proposed theorems is the Hoeffding concentration inequality, defined as follows:
\begin{lemma}\cite[Theroem~2.6.2]{vershynin2018high}
\label{hoff}
Let $X_1,\cdots,X_N$ be independent, mean zero, subgaussian random variables. Then, for every $t\geq 0$, we have
\begin{equation}
\label{z_4}
\mathbb{P}\left(\left|\sum_{i=1}^{N}X_i\right|\geq t\right)\leq2e^{-\frac{ct^2}{\sum_{i=1}^{N}\|X_i\|_{\psi_2}^2}},
\end{equation}
where $c$ is a positive constant.
\end{lemma}
In the following proposition, we will show that the quantized mapping $\mathcal{Q}_{\Delta, \mbT}\left(\mathcal{P}_{\Omega}\left(\mbX\right)\right)$ is a quasi-isometric embedding between $\left(\mathcal{K}^r\subset \mathbb{R}^{n_1\times n_2},\ell_1\right)$ and $\left(\mathcal{Q}\left(\mathcal{P}_{\Omega}\left(\mathcal{K}^r\right)\right)\subset \mathcal{A}^{n_1\times n_2}_{K}, \ell_1\right)$. Then, based on this embedding, an upper-bound for the recovery performance will be derived:
\begin{proposition}
\label{Theorem_q}
Define the set $\mathcal{K}^r$ as
\begin{equation}
\label{n_02}
\mathcal{K}^r=\left\{\mbX^{\prime}\in\mathbb{R}^{n_1\times n_2}\mid\operatorname{rank}(\mbX^{\prime})\leq r, \left\|\mbX^{\prime}\right\|_{\mathrm{max}}\leq\alpha\right\}\subset\mathbb{R}^{n_1\times n_2}.
\end{equation}
Consider a matrix $\mbX\in\mathcal{K}^r$. Now, assume that $m$ entries of $\mbX$, randomly selected with uniform sampling, undergo scalar quantization with dither values following $\tau_{i,j}\sim\mathcal{U}_{[-\frac{\Delta}{2},\frac{\Delta}{2}]}$, and a resolution of $\Delta$. With constants $c$, $c^{\prime}>0$, and $\varepsilon \in (0,1)$, it can be asserted that the following quasi-isometric quantized embedding holds with a probability of at least $1-2e^{-c^{\prime}\varepsilon^2 m}$ as
\begin{equation}
\label{a_50}
\left|\mathcal{D}\left(\mbX,\mbY\right)-\frac{1}{n_1 n_2}\left\|\mbX-\mbY\right\|_1\right|\leq \varepsilon \left\|\mbX-\mbY\right\|_{\mathrm{F}}+c \varepsilon\Delta.
\end{equation}
for all $\mbX, \mbY \in\mathcal{K}^r$ when the required number of samples must satisfy
\begin{equation}
m\gtrsim \varepsilon^{-2} r\left(n_1+n_2\right)\log\left(1+\frac{\|\mathcal{K}^r\|}{\Delta \varepsilon}\right).   
\end{equation}
\end{proposition}
The proof of Proposition~\ref{Theorem_q} is provided in Appendix~\ref{ape1}. In the derived quasi-isometry, certain differences arise compared to the quasi-isometry in \eqref{jaack}, primarily due to the effect of uniform sampling in matrix completion. In this setting, we can directly obtain the expected value of $\mathcal{D}$, which appears in the form of an $\ell_1$ norm, without the need to bound it by the expected Gaussian norm $\left(\frac{2}{\pi}\right)^{\frac{1}{2}}\|\cdot\|_2$ to achieve a controllable or meaningful quantity. As shown in Appendix~\ref{ape1}, the Frobenius norm term can be replaced by a factor $2\alpha\varepsilon$, or even expressed in terms of the $\ell_1$ norm:
\begin{equation}
\frac{1}{n_1 n_2}\left\|\mbX-\mbY\right\|_1-c_1 \varepsilon(\alpha+\Delta)\leq\mathcal{D}\left(\mbX,\mbY\right)\leq \frac{1}{n_1 n_2}\left\|\mbX-\mbY\right\|_1+c_1 \varepsilon(\alpha+\Delta).    
\end{equation} 
An important observation is that, even in this form of quasi-isometry, the upper bound decreases as the number of samples increases.
\begin{theorem}
\label{recovery}
By assuming the consistent property from Definition~\ref{def_1}, and the quasi-isometric embedding provided in Proposition~\ref{Theorem_q}, the recovery error of the quantized matrix completion for $(\mbX,\mbY)\in\mathcal{K}^r$ with can be upper-bounded with a probability of at least $1-2e^{-c^{\prime}\varepsilon^2 m}$ as
\begin{equation}
\left\|\mbX-\mbY\right\|_1\lesssim\varepsilon n_1 n_2\left(\alpha+\Delta\right).    
\end{equation}
\end{theorem}
The proof of Theorem~\ref{recovery} is investigated in Appendix~\ref{ape2}. It is worth noting that if the dynamic range of measurements satisfies \eqref{a7}, i.e., $\alpha<\frac{\Delta}{2}$, we can find the upper-bound for the one-bit matrix completion based on the derived quasi-isometry as follows:
\begin{equation}
\left\|\mbX-\mbY\right\|_1\lesssim \varepsilon n_1 n_2\Delta.  
\end{equation}
In the following corollary, we establish an upper bound on the rate of decay with respect to the number of samples $m$ for any consistent solver addressing the dithered quantized matrix completion problem.
\begin{corollary}
\label{col_1}
For any consistent solver addressing the dithered quantized matrix completion problem, the rate of decay with respect to the number of samples $m$ is at most of order $\mathcal{O}\left(m^{-\frac{1}{2}}\right)$.
\end{corollary}
\begin{IEEEproof}
From Theorem~\ref{recovery}, the lower bound on the number of samples $m$ depends on $\varepsilon^{-2}$. Consequently, the decay rate of $\varepsilon$ with respect to $m$ is $\mathcal{O}\left(m^{-\frac{1}{2}}\right)$. Therefore, the rate of decay for any consistent solver is at most $\mathcal{O}\left(m^{-\frac{1}{2}}\right)$, completing the proof.
\end{IEEEproof}
%The rate of reduction for $\varepsilon$ concerning $m$ is characterized by $\mathcal{O}\left(m^{-\frac{1}{2}}\right)$.

\section{Estimates Analysis of One-Bit Matrix Completion}
\label{pro}
As established in \eqref{a5} and \eqref{a7}, one-bit quantization, representing an extreme instance of quantization, can be viewed as a special case of scalar quantization when certain conditions on the resolution parameter $\Delta$ are satisfied. The theoretical guarantees presented in Section~\ref{OB-MC} generally apply to scalar quantization, including the limiting one-bit case. However, these guarantees were derived using a discontinuous mapping $\mathcal{D}(\mbX,\mbY)$ and by subsequently relating this mapping to the pseudo-distance $\mathcal{D}^t(\mbX,\mbY)$. This analysis led to an $\ell_1$-norm error recovery guarantee that holds with high probability. In contrast, the analysis in this section focuses specifically on the one-bit quantization problem by constructing a Lipschitz-continuous mapping, which both simplifies the derivations and enables a Frobenius-norm error recovery guarantee.

We begin by formulating the dithered one-bit quantization process as a \emph{linear feasibility system} and subsequently reformulate the dithered one-bit matrix completion problem as a nuclear norm minimization task. We then introduce the concept of the \emph{Finite-Volume Property} and, building upon this property, construct a continuous mapping that forms the basis for establishing our Frobenius-norm error recovery guarantee.

\subsection{Dithered One-Bit Quantization}
\label{ss_4}
In one-bit quantization, introducing a dithering sequence significantly improves reconstruction performance compared to the ditherless setting \cite{eamaz2022covariance}. The dithering sequence can, in general, be drawn from an arbitrary distribution. In this work, however, our theoretical guarantees are established under the assumption that the dithering sequence is uniformly distributed. Accordingly, for one-bit quantization with such dithering, each one-bit measurement $r_k$ is obtained as $r_{k} = \operatorname{sgn}\left(x_{k}-\tau_{k}\right)$, where $\tau_k$ denotes the corresponding dither value.

The information obtained through one-bit sampling with a dithering sequence can be expressed as a system of linear inequalities. Specifically, each one-bit measurement satisfies $r_{k}=+1$ when $x_{k}\geq\tau_{k}$ and $r_{k}=-1$ when $x_{k}<\tau_{k}$. By stacking the signal entries into $\mathbf{x}=[x_{k}] \in \mathbb{R}^{n}$ and the corresponding one-bit measurements into $\mathbf{r}_x=[r_{k}] \in \{-1,1\}^{n}$, the feasible region that characterizes the geometric location of $\mathbf{x}$ can be expressed as
\begin{equation}
\label{eq:4}
r_{k}\left(x_{k}-\tau_{k}\right) \geq 0.
\end{equation}
The vectorized form of \eqref{eq:4} can be written as $\mathbf{r}_x \odot \left(\mathbf{x}-\btau\right) \succeq \mathbf{0}$, or equivalently,
\begin{equation}
\label{eq:6}
\begin{aligned}
\bOmega_{x} \mathbf{x} &\succeq \mathbf{r}_x \odot \btau,
\end{aligned}
\end{equation}
where $\bOmega_{x} \triangleq \operatorname{diag}\left(\mathbf{r}_x\right)$. This linear system of inequalities derived from the one-bit sampling scheme can be reformulated as a \textit{one-bit polyhedron} defined by
%\par\noindent\small
\begin{equation}
\label{eq:8n}
\begin{aligned}
\mathcal{P}_{x}=\left\{\mathbf{x}^{\prime}\in\mathbb{R}^n \mid \bOmega_{x} \mathbf{x}^{\prime} \succeq \mathbf{r}_x \odot \btau\right\}\subset \mathbb{R}^n.
\end{aligned}
\end{equation}%\normalsize
In the subsequent section, we make use of the one-bit polyhedron in \eqref{eq:8n} to characterize the feasible region of the one-bit matrix completion problem.

\subsection{Dithered One-Bit Matrix Completion}
\label{ss_5}
In the one-bit matrix completion problem, we only observe the one-bit data matrix $\mbR_x=\left[r_{i,j}\right]\in\{-1,0,1\}^{n_1\times n_2}$, whose entries are determined by comparing the corresponding elements of the sampled matrix $\mathcal{P}_{\Omega}\left(\mbX\right)$ with those of a dithering matrix $\mbT=\left[\tau_{i,j}\right]\in \mathbb{R}^{n_1\times n_2}$ according to the following relationship:
\begin{equation}
\label{St_2}
\begin{aligned}
r_{i,j} &= \begin{cases} +1 & X_{i,j}\geq\tau_{i,j},\\ -1 & X_{i,j}<\tau_{i,j},
\end{cases} \quad (i,j)\in\Omega,
\end{aligned}
\end{equation}
and $r_{i,j}=0$, for all $(i,j)\notin\Omega$.
Let $\mbP\in\{0,1\}^{m\times n_1 n_2}$ denote a permutation matrix that selects only the observed entries indexed by $\Omega$. Using this notation, the one-bit measurement model can be expressed as the following linear feasibility system:
\begin{equation}
\label{St_3}
\mbP\bOmega_x\operatorname{vec}\left(\mbX\right) \succeq\mbP\left(\operatorname{vec}\left(\mbR_x\right)\odot\operatorname{vec}\left(\mbT\right)\right),
\end{equation}
where $\bOmega_x=\operatorname{diag}\left(\operatorname{vec}\left(\mbR_x\right)\right)$.
The corresponding feasible set for recovering the low-rank matrix $\mbX$ from its one-bit observations is thus characterized as
\begin{equation}
\label{St_6}
\begin{aligned}
\mathcal{F}_{\mbX}= \left\{\mbX^{\prime}\in\mathbb{R}^{n_1\times n_2} \mid \mbP\bOmega_x\operatorname{vec}\left(\mbX^{\prime}\right) \succeq\mbP\left(\operatorname{vec}\left(\mbR_x\right)\odot\operatorname{vec}\left(\mbT\right)\right),~\left\|\mbX^{\prime}\right\|_{\star}\leq\epsilon\right\}\subset \mathbb{R}^{n_1\times n_2},
\end{aligned}
\end{equation}
where $\epsilon$ is a predefined threshold. To estimate $\mbX$, the one-bit matrix completion problem is formulated as a \emph{nuclear norm minimization} task:
\begin{equation}
\label{St_7}
\begin{aligned}
\mathcal{P}_{\mbX}:\quad\underset{\mbX^{\prime}}{\textrm{minimize}} \quad &\tau\left\|\mbX^{\prime}\right\|_{\star}+\frac{1}{2}\left\|\mbX^{\prime}\right\|^{2}_{\mathrm{F}}\\ \text{subject to} \quad &\mbP\bOmega_x\operatorname{vec}\left(\mbX^{\prime}\right) \succeq\mbP\left(\operatorname{vec}\left(\mbR_x\right)\odot\operatorname{vec}\left(\mbT\right)\right),
\end{aligned}
\end{equation}
for some fixed $\tau\geq 0$.
More than nuclear norm, the Frobenius norm is also considered to control the amplitudes of the unknown data \cite{cai2010singular}.

\subsection{Error Recovery Guarantee}
\label{er_one}
In contrast to the discontinuous mapping used in Section~\ref{OB-MC} for quantized matrix completion, the analysis in this section for one-bit quantization is built upon a continuous mapping. To elaborate, let the matrix $\mbX=[X_{i,j}]\in\mathbb{R}^{n_1\times n_2}$ denote the true matrix that we aim to recover through the program $\mathcal{P}_\mbX$ defined in \eqref{St_7}. Let $d_{i,j}$ represent the distance between $X_{i,j}$ and its corresponding constraint hyperplanes in $\mathcal{P}_\mbX$. A straightforward derivation shows that this distance takes the form $d_{i,j}=\left|X_{i,j}-\tau_{i,j}\right|$ for all $(i,j)\in\Omega$. As the number of observed samples increases, i.e., as $m$ grows, additional hyperplanes are introduced into the constraint set of $\mathcal{P}_\mbX$. The intersection of these hyperplanes defines the feasible region, and any point within this region that also satisfies the low-rank constraint can serve as a valid solution to $\mathcal{P}_\mbX$. Interestingly, as the number of samples $m$ increases, the feasible region becomes progressively smaller, thereby improving the likelihood that the solution to $\mathcal{P}_\mbX$ lies close to the true matrix $\mbX$. Furthermore, as $m$ grows, the empirical mean of the distances $\{d_{i,j}\}_{(i,j)\in\Omega}$ converges to its expected value. Consequently, we argue that as the number of samples increases, any solution that satisfies the hyperplane constraints of $\mathcal{P}_\mbX$ will exhibit a smaller error with respect to the true matrix $\mbX$ with high probability. We refer to this phenomenon as the ``Finite-Volume Property''. To formally establish our result, we define the following operator:
\begin{definition}
\label{def_T}
For a matrix $\mbX=[X_{i,j}]\in\mathbb{R}^{n_1\times n_2}$ and a dithering matrix $\mbT=\left[\tau_{i,j}\right]\in \mathbb{R}^{n_1\times n_2}$, define $d_{i,j}=\left|X_{i,j}-\tau_{i,j}\right|$ as a distance between the $(i,j)$-th entries of $\mbX$ and $\mbT$ for all $(i,j)\in\Omega$. We then define the empirical average of these distances as
\begin{equation}
\label{a_3}
\begin{aligned}
T(\mbX)=\frac{1}{m}\sum_{(i,j)\in\Omega}\left|X_{i,j}-\tau_{i,j}\right|,
\end{aligned}
\end{equation}
where $|\Omega|=m$.
\end{definition}
Before presenting our main result on one-bit matrix completion, we first introduce the notion of consistency in the context of dithered one-bit quantization, which we formally define as follows:
\begin{definition}
\label{def_2}
Define a low-rank matrix as $\mbX=[X_{i,j}]\in\mathbb{R}^{n_1\times n_2}$ and let the dither matrix be $\mbT=\left[\tau_{i,j}\right]\in \mathbb{R}^{n_1\times n_2}$. Let $\mbY=[Y_{i,j}]\in\mathbb{R}^{n_1\times n_2}$ denote the estimate produced by an arbitrary reconstruction algorithm addressing problem \eqref{St_7}. We say that such a reconstruction algorithm is consistent if
\begin{equation}
\label{con1}
\operatorname{sgn}\left(X_{i,j}-\tau_{i,j}\right)=\operatorname{sgn}\left(Y_{i,j}-\tau_{i,j}\right),~(i,j)\in\Omega,
\end{equation}
or in the matrix form
\begin{equation}
\label{con2}
\operatorname{sgn}\left(\mathcal{P}_{\Omega}\left(\mbX-\mbT\right)\right)=\operatorname{sgn}\left(\mathcal{P}_{\Omega}\left(\mbY-\mbT\right)\right).
\end{equation}
\end{definition}
The notion of consistency in Definition~\ref{def_2} is analogous to that in Definition~\ref{def_1}. In this case, instead of satisfying a multi-bit quantization cell, both $\mbX$ and $\mbY$ are required to satisfy the same set of linear inequalities specified in the program $\mathcal{P}_\mbX$ in \eqref{St_7}.
In the following theorem, we establish a universal one-bit matrix completion guarantee that holds for any consistent reconstruction algorithm as defined in Definition~\ref{def_2}.
\begin{theorem}
\label{thr_onebit}
Consider the set 
\begin{equation}
\label{n_2}
\mathcal{K}^r=\left\{\mbX^\prime\in\mathbb{R}^{n_1\times n_2}\mid\operatorname{rank}\left(\mbX^\prime\right)\leq r,\left\|\mbX^\prime\right\|_{\mathrm{max}}\leq\alpha\right\}.
\end{equation}
Suppose $m$ entries of $\mbX$ with locations sampled uniformly at random are compared with a sequence of uniform dithers generated as $\tau_{i,j}\sim \mathcal{U}_{\left[-\zeta,\zeta\right]}$ for all $(i,j)\in\Omega$ and $\zeta\geq\alpha$, resulting in the observed one-bit data. With a universal constant $c>0$ and $\epsilon\in(0,1)$, the following recovery bound holds with probability at least $1-3e^{-c\epsilon^2m}$ for all matrices $\mbX,\mbY\in\mathcal{K}^r$ that satisfy the consistent reconstruction property in Definition~\ref{def_2}:
\begin{equation}
\label{a_5}
\|\mbX-\mbY\|_{\mathrm{F}}\leq 4\sqrt{\epsilon\zeta n_1n_2}.
\end{equation}
This guarantee holds provided that the number of samples satisfies
\begin{equation}
\label{samplereq}
m\gtrsim\epsilon^{-2}r(n_1+n_2)\log\left(1+\epsilon^{-1}+\|\mathcal{K}^r\|\right).
\end{equation}
\end{theorem}
The proof of Theorem~\ref{thr_onebit} is presented in Appendix~\ref{ape3}. In the following corollary, we establish an upper bound on the rate of decay with respect to the number of samples $m$ for any consistent solver addressing the dithered one-bit matrix completion problem.
\begin{corollary}
\label{col_2}
For any consistent solver addressing the dithered one-bit matrix completion problem, the rate of decay with respect to the number of samples $m$ is at most of order $\mathcal{O}\left(m^{-\frac{1}{4}}\right)$.
\end{corollary}
The proof of Corollary~\ref{col_2} is identical to that of Corollary~\ref{col_1} and is omitted here.
As stated in Theorem~\ref{thr_onebit}, the result holds for any consistent solver; i.e., for any approximate solution $\mbY = [Y_{i,j}]$ satisfying \eqref{con1}. In the following theorem, we extend this result to a broader class of solvers that may not yield consistent approximate solutions as defined in Definition~\ref{def_2}.
\begin{theorem}
\label{thr_2}
Under the assumptions of Theorem~\ref{thr_onebit}, with probability at least $1 - 3e^{-c\epsilon^2 m}$ for some universal constant $c > 0$ and $\epsilon \in (0,1)$, the following upper recovery bound holds for all $\mbX, \mbY \in \mathcal{K}^r$ that do not satisfy the consistency property in Definition~\ref{def_2}:
\begin{equation}
\label{noc}
\|\mbX-\mbY\|_{\mathrm{F}}\leq 4\sqrt{\epsilon\zeta n_1n_2+\zeta^2n_1n_2d_{\mathrm{H}}\left(\operatorname{vec}\left(\mbR_x\right),\operatorname{vec}\left(\mbR_y\right)\right)},
\end{equation}
where $\mbR_x$ and $\mbR_y$ denote the one-bit measurements corresponding to $\mathcal{P}_{\Omega}\left(\mbX\right)$ and $\mathcal{P}_{\Omega}\left(\mbY\right)$, respectively.
\end{theorem}
The proof of Theorem~\ref{thr_2} is provided in Appendix~\ref{ape4}. In general, the distance between $\mbX$ and $\mbY$ is expected to increase with high probability when they do not belong to the same consistency cell. The result of Theorem~\ref{thr_2} formalizes this phenomenon through the presence of the term $d_{\mathrm{H}}\left(\operatorname{vec}\left(\mbR_x\right),\operatorname{vec}\left(\mbR_y\right)\right)$ in the upper recovery bound. Notably, as $d_{\mathrm{H}}\left(\operatorname{vec}\left(\mbR_x\right),\operatorname{vec}\left(\mbR_y\right)\right)\rightarrow 0$; i.e., when $\mbX$ and $\mbY$ lie within the same consistency cell, the bound in Theorem~\ref{thr_2} naturally reduces to that of Theorem~\ref{thr_onebit}.

%\section{Conclusions}
%\label{sec:conclusions}

\appendices
\section{Proof of Proposition~\ref{Theorem_q}}
\label{ape1}
Let us define $\Tilde{d}_{i,j} = \left|\left[\mathcal{P}_{\Omega}\left(\mbX-\mbY\right)\right]_{i,j}\right|$ with $(i,j)\in\Omega$, and $d^{t}_{i,j}=d^t\left(X_{i,j}+\tau_{i,j}, Y_{i,j}+\tau_{i,j}\right)$, where $d^t(\cdot)$ is defined in \eqref{e1}. For a subgaussian random vector $\bpsi$,
we have $\left\|\langle\bpsi,\mbx-\mby\rangle\right\|_{\psi_2}\lesssim \|\mbx-\mby\|_2$. In our setting, since the sensing matrix corresponds to uniform sampling via a permutation matrix, it follows directly that
\begin{equation}
\left\|\Tilde{d}_{i,j}\right\|_{\psi_2}=  \left\|\left[\mathcal{P}_{\Omega}\left(\mbX-\mbY\right)\right]_{i,j}\right\|_{\psi_2}\lesssim \left\|\mbX-\mbY\right\|_{\mathrm{F}},~(i,j)\in\Omega.
\end{equation}
If one instead wishes to express the bound in terms of the infinite norm, we can write
\begin{equation}
\label{37}
\left\|\left[\mathcal{P}_{\Omega}\left(\mbX-\mbY\right)\right]_{i,j}\right\|_{\psi_2}\lesssim \left\|\left[\mathcal{P}_{\Omega}\left(\mbX-\mbY\right)\right]_{i,j}\right\|_{\infty}\leq\left\|\mbX-\mbY\right\|_{\mathrm{max}} \leq 2\alpha,~(i,j)\in\Omega.
\end{equation}
This bound follows from the fact that, for bounded random variables, the subgaussian norm satisfies $\|\cdot\|_{\psi_2}\lesssim \|\cdot\|_{\infty}$.

Thus, for the subgaussian norm of $d^{t}_{i,j}$, 
by invoking the result in \eqref{prop} and using the bound $\|\cdot\|_{\psi_2}\lesssim \|\cdot\|_{\infty}$, we can write
\begin{equation}
\begin{aligned}
\label{subgaussbound}
\left\|d^t_{i,j}\right\|_{\psi_2} & \leq\left\|d^t_{i,j}-\Tilde{d}_{i,j}\right\|_{\psi_2}+\|\Tilde{d}_{i,j}\|_{\psi_2} \\
& \lesssim\left\|d^t\left(X_{i,j}+\tau_{i,j}, Y_{i,j}+\tau_{i,j}\right)-|X_{i,j}-Y_{i,j}|\right\|_{\psi_2}+\left\|\mbX-\mbY\right\|_{\mathrm{F}} \\
& \lesssim \Delta+|t|+\left\|\mbX-\mbY\right\|_{\mathrm{F}},~(i,j)\in\Omega.
\end{aligned}
\end{equation}
For fixed matrices $\mbX,\mbY\in\mathcal{K}^r$, by applying Hoeffding's inequality together with \eqref{subgaussbound}, we obtain the following concentration bound:
\begin{equation}
\label{fixed}
\mathbb{P}\left(\left|\mathcal{D}^{t}\left(\mbX,\mbY\right)-\frac{1}{m}\sum_{(i,j)\in\Omega}\mathbb{E}d_{i,j}^t\right|\geq\varepsilon\left(\Delta+|t|+\left\|\mbX-\mbY\right\|_{\mathrm{F}}\right)\right)\leq2e^{-c \varepsilon^2 m},
\end{equation}
for some universal constant $c>0$. As discussed in Section~\ref{ss_3}, each term in the $\ell_1$ distance defined in \eqref{n1} corresponds to $d^0\left(X_{i,j}+\tau_{i,j}, Y_{i,j}+\tau_{i,j}\right)$ for $(i,j)\in\Omega$. We denote this random variable by $d_{i,j}=d^0\left(X_{i,j}+\tau_{i,j}, Y_{i,j}+\tau_{i,j}\right)$ for all $(i,j)\in\Omega$. To connect the expected pseudo-distance $d^t_{i,j}$ with the expected $\ell_1$
distance characterized in Lemma~\ref{yuta}, we establish the following bound:
\begin{equation}
\begin{aligned}
\left|\mathbb{E} d^t_{i,j}-\mathbb{E} d_{i,j}\right|\leq \mathbb{E}\left|d^t_{i,j}-d_{i,j}\right|=\mathbb{E}_{(i,j)} \mathbb{E}_{\tau}\left|d^t\left(X_{i,j}+\tau, Y_{i,j}+\tau\right)-d^0\left(X_{i,j}+\tau, Y_{i,j}+\tau\right)\right|,
\end{aligned}
\end{equation}
where, for notational simplicity, the random variable $\tau_{i,j}$ is denoted by $\tau$. By taking the expectation over $\tau$, we obtain the following bound:
\begin{equation}
\begin{aligned}
& \mathbb{E}_{\tau}\left|d^t\left(X_{i,j}+\tau, Y_{i,j}+\tau\right)-d^0\left(X_{i,j}+\tau, Y_{i,j}+\tau\right)\right|\\
& \leq \Delta \sum_{k \in \mathbb{Z}} \mathbb{E}_{\tau} \mathbb{I}\left[\left\{\left|X_{i,j}+\tau-k \Delta\right| \leq|t|\right\} \cup\left\{\left|Y_{i,j}+\tau-k \Delta\right| \leq|t|\right\}\right] \\
& \leq \Delta \sum_{k \in \mathbb{Z}} \mathbb{E}_{\tau} \mathbb{I}\left[\left\{\left|X_{i,j}+\tau-k \Delta\right| \leq|t|\right\}\right]+\Delta \sum_{k \in \mathbb{Z}} \mathbb{E}_{\tau} \mathbb{I}\left[\left\{\left|Y_{i,j}+\tau-k \Delta\right| \leq|t|\right\}\right].
\end{aligned}
\end{equation}
For $\tau\sim\mathcal{U}_{[-\frac{\Delta}{2},\frac{\Delta}{2}]}$
\begin{equation}
\begin{aligned}
\Delta \sum_{k \in \mathbb{Z}} \mathbb{E}_{\tau} \mathbb{I}\left[\left\{\left|X_{i,j}+\tau-k \Delta\right| \leq|t|\right\}\right]
&=\sum_{k \in \mathbb{Z}} \int_{-\frac{\Delta}{2}}^{\frac{\Delta}{2}} \mathbb{I}\left[\left\{\left|X_{i,j}+\tau-k \Delta\right| \leq|t|\right\}\right] \mathrm{d} \tau \\
&=\int_{\mathbb{R}} \mathbb{I}\left[\left\{\left|X_{i,j}+\tau\right| \leq|t|\right\}\right] \mathrm{d} \tau=2|t|.
\end{aligned}
\end{equation}
Analogous results hold for $Y_{i,j}$, i.e., $ \Delta \sum_{k \in \mathbb{Z}} \mathbb{E}_{\tau} \mathbb{I}\left[\left\{\left|Y_{i,j}+\tau-k \Delta\right| \leq|t|\right\}\right]=2|t|$. Consequently, since both quantities are independent of the randomness of $(i,j)$, we obtain
\begin{equation}
\left|\mathbb{E} d^t_{i,j}-\mathbb{E} d_{i,j}\right| \lesssim  |t|.
\end{equation}
To relate $\mathcal{D}^{t}\left(\mbX,\mbY\right)$ to the expected $\ell_1$
distance characterized in Lemma~\ref{yuta}, we can write
\begin{equation}
\begin{aligned}
\left|\mathcal{D}^{t}\left(\mbX,\mbY\right)-\frac{1}{m}\sum_{(i,j)\in\Omega}\mathbb{E}d_{i,j}^t\right|&=\left|\mathcal{D}^{t}\left(\mbX,\mbY\right)-\frac{1}{m}\sum_{(i,j)\in\Omega}\mathbb{E}d_{i,j}^0-\frac{1}{m}\sum_{(i,j)\in\Omega}\left(\mathbb{E}d_{i,j}^t-\mathbb{E}d_{i,j}^0\right)\right|\\&\geq\left|\mathcal{D}^{t}\left(\mbX,\mbY\right)-\frac{1}{m}\sum_{(i,j)\in\Omega}\mathbb{E}d_{i,j}^0\right|-\frac{1}{m}\left|\sum_{(i,j)\in\Omega}\left(\mathbb{E}d_{i,j}^t-\mathbb{E}d_{i,j}^0\right)\right|\\&\geq\left|\mathcal{D}^{t}\left(\mbX,\mbY\right)-\frac{1}{n_1 n_2}\left\|\mbX-\mbY\right\|_1\right|-\frac{1}{m}\sum_{(i,j)\in\Omega}\left|\mathbb{E}d_{i,j}^t-\mathbb{E}d_{i,j}^0\right|\\&\geq\left|\mathcal{D}^{t}\left(\mbX,\mbY\right)-\frac{1}{n_1 n_2}\left\|\mbX-\mbY\right\|_1\right|-c_1|t|,
\end{aligned}
\end{equation}
for some constant $c_1>0$. Combining this result with \eqref{fixed}, we obtain
\begin{equation}
\mathbb{P}\left(\left|\mathcal{D}^{t}\left(\mbX,\mbY\right)-\frac{1}{n_1 n_2}\left\|\mbX-\mbY\right\|_1\right|\geq c_1|t|+\varepsilon\left(\Delta+|t|+\left\|\mbX-\mbY\right\|_{\mathrm{F}}\right)\right)\leq2e^{-c \varepsilon^2 m}.
\end{equation}
We now aim to extend this result to hold for all $\mbX,\mbY\in\mathcal{K}^r$.
Specifically, for all $(\mbX^{\prime}, \mbY^{\prime}) \in \mathcal{K}^r_{\rho}$, where for any $\mbA \in \mathcal{K}^{r}$ there exists $\mbA^{\prime} \in \mathcal{K}^{r}_{\rho}$ such that $\left\|\mbA - \mbA^{\prime}\right\|_{\mathrm{F}} \leq \rho$, the following concentration result holds:
\begin{equation}
\label{43}
\mathbb{P}\left(\sup_{\mbX^{\prime},\mbY^{\prime}\in\mathcal{K}^r_{\rho}}\left|\mathcal{D}^{t}\left(\mbX^{\prime},\mbY^{\prime}\right)-\frac{1}{n_1 n_2}\left\|\mbX^{\prime}-\mbY^{\prime}\right\|_1\right|\geq c_1|t|+\varepsilon\left(\Delta+|t|+\left\|\mbX^{\prime}-\mbY^{\prime}\right\|_{\mathrm{F}}\right)\right)\leq2e^{2\mathcal{H}\left(\mathcal{K}^r,\rho\right)-c \varepsilon^2 m},
\end{equation}
%where $\mathcal{H}\left(\mathcal{K}^r,\rho\right)$ is the Kolmogorov $\rho$-entropy. 
For structured sets such as the low-rank matrix set, the upper bound on the Kolmogorov $\rho$-entropy is given by \cite[Table~1]{jacques2017time}:
\begin{equation}
\label{kol_bound}
\mathcal{H}\left(\mathcal{K}^r,\rho\right)\lesssim r \left(n_1+n_2\right)\log\left(1+\frac{\|\mathcal{K}^r\|}{\rho}\right).
\end{equation}
Using this result, we can readily verify that the concentration bound in \eqref{43} holds with failure probability at most $2e^{-c^{\prime}\varepsilon^2m}$ for some universal constant $c^\prime>0$, provided that the number of samples $m$ satisfies
\begin{equation}
\label{sample_m}
m\gtrsim \varepsilon^{-2} r\left(n_1+n_2\right)\log\left(1+\frac{\|\mathcal{K}^r\|}{\rho}\right).   
\end{equation}
The significance of the pseudo-distance $\mathcal{D}^t$ becomes evident here as we aim to extend these results to all pairs of matrices in $\mathcal{K}^{r}$ by examining the continuity properties of $\mathcal{D}^t$ within a limited neighborhood around the selected matrices. To initiate this analysis, we first establish a bound on the measurements (as detailed below), which quantifies the error between elements of $\mathcal{K}^r$ and $\mathcal{K}^r_\rho$ in the context of matrix completion.

Let us define $\widehat{\mbX}=\mbX-\mbX^{\prime}$, and $\widehat{\mbY}=\mbY-\mbY^{\prime}$ where $\left(\widehat{\mbX},\widehat{\mbY}\right)\in\left(\mathcal{K}^{r}-\mathcal{K}^r\right)\bigcap \rho \mathbb{B}_{\mathrm{F}}^{n_1\times n_2}$. It then follows that
\begin{equation}
\label{32}
\left\|\mathcal{P}_{\Omega}\left(\widehat{\mbX}\right)\right\|_{\mathrm{F}}\leq \left\|\widehat{\mbX}\right\|_{\mathrm{F}}\leq \rho.
\end{equation}
We now analyze the continuity of the pseudo-distance with respect to Frobenius-norm perturbations of matrices. Our approach follows that of \cite{jacques2017small}, with a key distinction: unlike the subgaussian measurement bound of $\rho\sqrt{m}$, the matrix completion measurements in our setting are bounded by $\rho$, as shown in \eqref{32}. This difference necessitates a corresponding adjustment of the parameters in \cite[Lemma~3]{jacques2017small}. The following lemma is instrumental in establishing that the pseudo-distance between matrices in $\mathcal{K}^r$ can be controlled via their perturbed counterparts in $\mathcal{K}^r_{\rho}$, which is essential for extending the result in \eqref{43} to all matrix pairs in the main set $\mathcal{K}^r$.
\begin{lemma}
\label{l22}
Assume that $\left\|\mathcal{P}_{\Omega}\left(\widehat{\mbX}\right)\right\|_{\mathrm{F}}\leq \rho$, and $\left\|\mathcal{P}_{\Omega}\left(\widehat{\mbY}\right)\right\|_{\mathrm{F}}\leq \rho$. Then for every $t\in\mathbb{R}$ and $0<P\leq 1$ we have
\begin{equation}
\begin{aligned}
\mathcal{D}^{t+\rho \sqrt{P}}\left(\mbX^{\prime}, \mbY^{\prime}\right)- \frac{4}{m}\left(\frac{2\Delta}{P}+\frac{\rho}{\sqrt{P}}\right) \leq \mathcal{D}^t\left(\mbX, \mbY\right)  \leq \mathcal{D}^{t-\rho \sqrt{P}}\left(\mbX^{\prime}, \mbY^{\prime}\right)+\frac{4}{m}\left(\frac{2\Delta}{P}+\frac{\rho}{\sqrt{P}}\right).
\end{aligned}
\end{equation}
\end{lemma}
\begin{IEEEproof}
The proof follows the approach of \cite[Appendix~D]{jacques2017small}, differing only in the upper bounds applied to our measurements. In Lemma~\ref{l22}, the parameter $P$ serves as a smoothing factor that helps manage the discontinuities introduced by quantization in the pseudo-distance $D^t$. Quantization induces sudden jumps whenever a projected value crosses a quantization threshold; the analysis addresses these abrupt transitions by replacing them with a softened or averaged version controlled by $P$. This smoothing ensures that $D^t$ remains continuous with respect to small Frobenius-norm perturbations of its arguments, despite the inherently discrete nature of quantization.

To establish this result, the proof introduces two complementary index sets, $T$ and $T^c$, which separate the ``regular'' and ``irregular'' measurement components. The set $T$ is defined as
\[
T := \left\{\, (i,j) \in \Omega,~ \left(\widehat{\mbX},\widehat{\mbY}\right)\in\left(\mathcal{K}^{r}-\mathcal{K}^r\right)\cap \rho \mathbb{B}_{\mathrm{F}}^{n_1\times n_2} : \left|\widehat{X}_{i,j}\right| \leq \rho \sqrt{P}, \; \left| \widehat{Y}_{i,j} \right| \leq \rho \sqrt{P} \,\right\},
\]
i.e., the collection of indices for which the perturbations of $\mbX$ and $\mbY$ remain small enough that their projected values do \textit{not} cross a quantization boundary. In these components, the perturbed projections stay within the same quantization bin as the unperturbed ones, and the pseudo-distance behaves smoothly. The complement $T^c$ includes the remaining indices where the perturbations are large enough to cause at least one quantization threshold crossing, resulting in discontinuous changes in the quantized outputs. By analyzing the contributions of each set separately, the proof shows that most indices lie in $T$, where continuity holds directly, while the total contribution from $T^c$ can be bounded %probabilistically 
and controlled via the parameters $\Delta$, $\rho$, and $P$. This partitioning is crucial for establishing that the pseudo-distance $D^t$ changes only slightly under small Frobenius-norm perturbations, thereby ensuring the desired continuity property needed to extend local results to all pairs of matrices in the considered set. Moreover, the cardinality of $T^{c}$ appears explicitly in the proof and plays a significant role in determining the resulting bounds. It is evaluated as follows:
%appears in the proof and plays a significant role in determining the obtained bounds. It is evaluated as follows:
\begin{equation}
\left\|\mathcal{P}_{\Omega}\left(\widehat{\mbX}\right)\right\|^2_{\mathrm{F}}+\left\|\mathcal{P}_{\Omega}\left(\widehat{\mbY}\right)\right\|^2_{\mathrm{F}}\leq 2 \rho^2,
\end{equation}
and
\begin{equation}
|T^c|\rho^2 P+\left\|\mathcal{P}_{T(\Omega)}\left(\widehat{\mbX}\right)\right\|^2_{\mathrm{F}}+\left\|\mathcal{P}_{T(\Omega)}\left(\widehat{\mbY}\right)\right\|^2_{\mathrm{F}}\leq\left\|\mathcal{P}_{\Omega}\left(\widehat{\mbX}\right)\right\|^2_{\mathrm{F}}+\left\|\mathcal{P}_{\Omega}\left(\widehat{\mbY}\right)\right\|^2_{\mathrm{F}},
\end{equation}
where \( T(\Omega) \) refers to the indices of \( T \subseteq \Omega\). As a result, one can readily obtain
\begin{equation}
|T^c|\leq \frac{2}{P}.    
\end{equation}
Considering the definition of $\mathcal{F}^{t}$ in \eqref{def_f_t}, we have, for all $(i,j)\in T$ and any $\lambda\in\mathbb{R}$
\begin{equation}
\begin{aligned}
\label{ft_subset}
\mathcal{F}^{t+\rho\sqrt{P}}\left(X^{\prime}_{i,j}+\tau_{i,j}-\lambda,Y^{\prime}_{i,j}+\tau_{i,j}-\lambda\right)&\subset \mathcal{F}^{t}\left(X_{i,j}+\tau_{i,j}-\lambda,Y_{i,j}+\tau_{i,j}-\lambda\right)\\&\subset    
\mathcal{F}^{t-\rho\sqrt{P}}\left(X^{\prime}_{i,j}+\tau_{i,j}-\lambda,Y^{\prime}_{i,j}+\tau_{i,j}-\lambda\right).
\end{aligned}
\end{equation}
Denoting $A_{i,j}=\max\left\{\left|\widehat{X}_{i,j}\right|,\left|\widehat{Y}_{i,j}\right|\right\}$, we find
\begin{equation}
\begin{aligned}
\mathcal{D}^{t+\rho \sqrt{P}}\left(\mbX^{\prime}, \mbY^{\prime}\right) & =\frac{\Delta}{m} \sum_{(i,j)\in\Omega} \sum_{k \in \mathbb{Z}} \mathbb{I}\left[\mathcal{F}^{t+\rho \sqrt{P}}\left(X^{\prime}_{i,j}+\tau_{i,j}-k\Delta, Y^{\prime}_{i,j}+\tau_{i,j}-k\Delta\right)\right] \\
\leq & \frac{\Delta}{m} \sum_{(i,j) \in T} \sum_{k \in \mathbb{Z}} \mathbb{I}\left[\mathcal{F}^t\left(X_{i,j}+\tau_{i,j}-k\Delta, Y_{i,j}+\tau_{i,j}-k\Delta\right)\right] \\
& +\frac{\Delta}{m} \sum_{(i,j) \in T^c} \sum_{k \in \mathbb{Z}} \mathbb{I}\left[\mathcal{F}^{t+\rho \sqrt{P}-A_{i,j}}\left(X_{i,j}+\tau_{i,j}-k\Delta, Y_{i,j}+\tau_{i,j}-k\Delta\right)\right] \\
\leq & \frac{\Delta}{m} \sum_{(i,j) \in T} \sum_{k \in \mathbb{Z}} \mathbb{I}\left[\mathcal{F}^t\left(X_{i,j}+\tau_{i,j}-k\Delta, Y_{i,j}+\tau_{i,j}-k\Delta\right)\right] \\
& +\frac{\Delta}{m} \sum_{(i,j) \in T^c} \sum_{k \in \mathbb{Z}} \mathbb{I}\left[\mathcal{F}^t\left(X_{i,j}+\tau_{i,j}-k\Delta, Y_{i,j}+\tau_{i,j}-k\Delta\right)\right] \\
& %\left.
+\frac{1}{m} \sum_{(i,j) \in T^c} \Delta \sum_{k \in \mathbb{Z}} \Biggl\rvert\, 
\mathbb{I}\left[\mathcal{F}^{t+\rho \sqrt{P}-A_{i,j}}\left(X_{i,j}+\tau_{i,j}-k\Delta, Y_{i,j}+\tau_{i,j}-k\Delta\right)\right] \\
& -\mathbb{I}\left[\mathcal{F}^t\left(X_{i,j}+\tau_{i,j}-k\Delta, Y_{i,j}+\tau_{i,j}-k\Delta\right)\right]\Biggr\rvert.
\end{aligned}
\end{equation}
By leveraging the pseudo-distance properties established in \eqref{prop}, we obtain
\begin{equation}
\begin{aligned}
\mathcal{D}^{t+\rho \sqrt{P}}\left(\mbX^{\prime}, \mbY^{\prime}\right)
& \leq \mathcal{D}^t\left(\mbX, \mbY\right)+\frac{4}{m} \sum_{(i,j) \in T^c}\left(\Delta+A_{i,j}-\rho \sqrt{P}\right) \\
& \leq\mathcal{D}^t\left(\mbX, \mbY\right)+\frac{8 \Delta}{m P}+\frac{4}{m} \sum_{(i,j) \in T^c}\left(A_{i,j}-\rho\sqrt{P}\right).
\end{aligned}
\end{equation}
On the other hand, we have
\begin{equation}
\begin{aligned}
\frac{1}{m} \sum_{(i,j) \in T^c}A_{i,j}&\leq \frac{1}{m}\left(\left\|\mathcal{P}_{T^c(\Omega)}\left(\widehat{\mbX}\right)\right\|_1+\left\|\mathcal{P}_{T^c(\Omega)}\left(\widehat{\mbY}\right)\right\|_1\right)\\
& \leq \frac{\sqrt{|T^c|}}{m}\left(\left\|\mathcal{P}_{T^c(\Omega)}\left(\widehat{\mbX}\right)\right\|_{\mathrm{F}}+\left\|\mathcal{P}_{T^c(\Omega)}\left(\widehat{\mbY}\right)\right\|_{\mathrm{F}}\right)\\
&\leq 2 \rho \frac{\sqrt{|T^c|}}{m},
\end{aligned}
\end{equation}
which leads to
\begin{equation}
\begin{aligned}
\mathcal{D}^{t+\rho \sqrt{P}}\left(\mbX^{\prime}, \mbY^{\prime}\right)
&\leq\mathcal{D}^t\left(\mbX, \mbY\right)+\frac{8 \Delta}{m P}+\frac{4\rho}{m}\left(2\sqrt{|T^c|}- |T^c| \sqrt{P}\right)\\
&\leq \mathcal{D}^t\left(\mbX, \mbY\right)+\frac{8 \Delta}{m P}+\frac{4\rho}{m\sqrt{P}}.
\end{aligned}
\end{equation}
The final inequality follows from the fact that, for any real number $t$, $2t-t^2\sqrt{P}\leq \frac{1}{\sqrt{P}}$. The lower bound can be derived in a similar manner by using the relationship established in \eqref{ft_subset}.
\end{IEEEproof}
Based on the concentration inequality in \eqref{43}, we have, with probability at least $1-2e^{c^{\prime}\varepsilon^2 m}$,
\begin{equation}
\begin{aligned}
\label{vv}
\left|\mathcal{D}^{(t-\rho\sqrt{P})}\left(\mbX^{\prime},\mbY^{\prime}\right)-\frac{1}{n_1n_2}\left\|\mbX^{\prime}-\mbY^{\prime}\right\|_1\right|\leq c_1\left|t-\rho\sqrt{P}\right|+\varepsilon\left(\Delta+\left|t-\rho\sqrt{P}\right|+\left\|\mbX^{\prime}-\mbY^{\prime}\right\|_{\mathrm{F}}\right),  
\end{aligned}
\end{equation}
and 
\begin{equation}
\label{ww}
\begin{aligned}
\left|\mathcal{D}^{(t+\rho\sqrt{P})}\left(\mbX^{\prime},\mbY^{\prime}\right)-\frac{1}{n_1n_2}\left\|\mbX^{\prime}-\mbY^{\prime}\right\|_1\right|\leq c_1\left|t+\rho\sqrt{P}\right|+\varepsilon\left(\Delta+\left|t+\rho\sqrt{P}\right|+\left\|\mbX^{\prime}-\mbY^{\prime}\right\|_{\mathrm{F}}\right),
\end{aligned}   
\end{equation}
provided that the number of samples $m$ satisfies \eqref{sample_m}.
Using Lemma~\ref{l22} and the upper bound in \eqref{vv}, we obtain
\begin{equation}
\begin{aligned}
\mathcal{D}^t(\mbX, \mbY) \leq & \mathcal{D}^{t-\rho \sqrt{P}}\left(\mbX^{\prime}, \mbY^{\prime}\right)+\frac{4}{m}\left(\frac{2\Delta}{P}+\frac{\rho}{\sqrt{P}}\right) \\
\leq & (c_1+\varepsilon)\left|t-\rho \sqrt{P}\right|+\frac{1}{n_1n_2}\left\|\mbX^{\prime}-\mbY^{\prime}\right\|_1 +\varepsilon\left\|\mbX^{\prime}-\mbY^{\prime}\right\|_{\mathrm{F}}+\varepsilon \Delta+\frac{4}{m}\left(\frac{2\Delta}{P}+\frac{\rho}{\sqrt{P}}\right).
\end{aligned}
\end{equation}
By the reverse triangle inequality, we have
\begin{equation}
\frac{1}{n_1n_2}\left|\left\|\mbX^{\prime}-\mbY^{\prime}\right\|_1-\left\|\mbX-\mbY\right\|_1  \right|\leq \frac{1}{n_1n_2}\left(\sqrt{n_1n_2}\left\|\widehat{\mbX}\right\|_{\mathrm{F}}+\sqrt{n_1n_2}\left\|\widehat{\mbY}\right\|_{\mathrm{F}}\right)\leq \frac{2\rho}{\sqrt{n_1n_2}}.
\end{equation}
Similarly, it follows that
\begin{equation}
\left|\left\|\mbX^{\prime}-\mbY^{\prime}\right\|_{\mathrm{F}}-\left\|\mbX-\mbY\right\|_{\mathrm{F}}  \right|\leq \left(\left\|\widehat{\mbX}\right\|_{\mathrm{F}}+\left\|\widehat{\mbY}\right\|_{\mathrm{F}}\right)\leq 2\rho.
\end{equation}
Assuming $\varepsilon < 1$, there exists a constant $c_2 > 0$ such that
\begin{equation}
\mathcal{D}^t\left(\mbX,\mbY\right)-\frac{1}{n_1n_2}\left\|\mbX-\mbY\right\|_1\leq \varepsilon \left\|\mbX-\mbY\right\|_{\mathrm{F}}+c_2\left(|t|+\rho\sqrt{P}+\rho+\varepsilon\Delta+\frac{\Delta}{mP}+\frac{\rho}{m\sqrt{P}}\right).
\end{equation}
Selecting the parameters such that $m P = \varepsilon^{-1}>1$ and $\rho=\Delta \varepsilon$, which gives $\frac{\rho}{m \sqrt{P}}=\frac{\varepsilon^{\frac{3}{2}}\Delta}{\sqrt{m}}\leq \Delta \varepsilon$. Similarly, the lower bound can be established by following the same line of proof, using Lemma~\ref{l22} together with the bound in \eqref{ww}. Consequently, the following quasi-isometric embedding holds with probability at least $1-2e^{c^{\prime}\varepsilon^2 m}$:
\begin{equation}
\left|\mathcal{D}^t\left(\mbX,\mbY\right)-\frac{1}{n_1n_2}\left\|\mbX-\mbY\right\|_1\right|\leq \varepsilon \left\|\mbX-\mbY\right\|_{\mathrm{F}}+c_3\left(|t|+\Delta \varepsilon\right),   
\end{equation}
for a universal constant $c_3>0$, provided that the number of samples $m$ satisfies $m\gtrsim \varepsilon^{-2} r\left(n_1+n_2\right)\log\left(1+\frac{\|\mathcal{K}^r\|}{\Delta\varepsilon}\right)$.
Setting $t=0$ recovers the result stated in Proposition~\ref{Theorem_q}.\\
$\bullet$ \textbf{Embedding result based on $\alpha$:}
We can reformulate the embedding result of Proposition~\ref{Theorem_q} in terms of the maximum-norm bound $\alpha$ rather than the Frobenius norm $\left\|\mbX-\mbY\right\|_{\mathrm{F}}$. Using the relation established in \eqref{37}, the concentration inequality in \eqref{43} can be restated in terms of $\alpha$ as follows:
\begin{equation}
\label{43_new}
\mathbb{P}\left(\sup_{\mbX^{\prime},\mbY^{\prime}\in\mathcal{K}^r_{\rho}}\left|\mathcal{D}^{t}\left(\mbX^{\prime},\mbY^{\prime}\right)-\frac{1}{n_1 n_2}\left\|\mbX^{\prime}-\mbY^{\prime}\right\|_1\right|\geq c_1|t|+\varepsilon\left(\Delta+|t|+\alpha\right)\right)\leq2e^{-c^\prime \varepsilon^2 m},
\end{equation}
for some universal constants $c_1,c^\prime>0$, provided that the number of samples $m$ satisfies \eqref{sample_m}.
Based on this result, we have, with probability at least $1-2e^{-c^{\prime}\varepsilon^2 m}$,
\begin{equation}
\begin{aligned}
\label{vv_new}
\left|\mathcal{D}^{(t-\rho\sqrt{P})}\left(\mbX^{\prime},\mbY^{\prime}\right)-\frac{1}{n_1n_2}\left\|\mbX^{\prime}-\mbY^{\prime}\right\|_1\right|\leq c_1\left|t-\rho\sqrt{P}\right|+\varepsilon\left(\Delta+\left|t-\rho\sqrt{P}\right|+\alpha\right),  
\end{aligned}
\end{equation}
and 
\begin{equation}
\label{ww_new}
\begin{aligned}
\left|\mathcal{D}^{(t+\rho\sqrt{P})}\left(\mbX^{\prime},\mbY^{\prime}\right)-\frac{1}{n_1n_2}\left\|\mbX^{\prime}-\mbY^{\prime}\right\|_1\right|\leq c_1\left|t+\rho\sqrt{P}\right|+\varepsilon\left(\Delta+\left|t+\rho\sqrt{P}\right|+\alpha\right),
\end{aligned}   
\end{equation}
whenever $m$ satisfies \eqref{sample_m}. Following the same reasoning as before and invoking Lemma~\ref{l22} together with \eqref{vv_new}, and assuming $\varepsilon<1$, there exists a constant $c_2>0$ such that
\begin{equation}
\mathcal{D}^t\left(\mbX,\mbY\right)-\frac{1}{n_1n_2}\left\|\mbX-\mbY\right\|_1\leq \varepsilon\left(\alpha+\Delta\right)+c_2\left(|t|+\rho\sqrt{P}+\frac{\rho}{\sqrt{n_1n_2}}+\frac{\Delta}{mP}+\frac{\rho}{m\sqrt{P}}\right).
\end{equation}
Selecting the parameters such that $m P = \varepsilon^{-1}>1$ and $\rho=\Delta\varepsilon^{\frac{3}{2}}\sqrt{m}$, which gives $\rho\sqrt{P}=\Delta\varepsilon$, $\frac{\rho}{\sqrt{n_1n_2}}\leq\Delta\varepsilon^{\frac{3}{2}}$, and $\frac{\rho}{m \sqrt{P}}=\Delta \varepsilon^2$. Following the same line of reasoning, the corresponding lower bound can be derived by invoking Lemma~\ref{l22} together with the bound in \eqref{ww_new}.

\section{Proof of Theorem~\ref{recovery}}
\label{ape2}
%Under the consistency assumption and its corresponding definition, and according to the quasi-isometric quantized embedding result given in Proposition~\ref{Theorem_q}, when $\mathcal{D}$ approaches zero, and by replacing the Frobenius norm in the upper bound of the sub-gaussian norm with the maximum norm upper bound 
Under the consistency assumption and the quasi-isometric quantized embedding established in Proposition~\ref{Theorem_q}, by letting $\mathcal{D}$ approach zero and substituting the Frobenius-norm upper bound with the maximum-norm bound $\alpha$, as given in \eqref{37}, we obtain, with probability at least $1-2e^{-c^{\prime}\varepsilon^2 m}$,
\begin{equation}
\frac{1}{n_1n_2}\left\|\mbX-\mbY\right\|_1\lesssim \varepsilon\left(\alpha+\Delta\right),
\end{equation}
thereby completing the proof.

\section{Proof of Theorem~\ref{thr_onebit}}
\label{ape3}
We begin the proof by presenting the following lemma:
\begin{lemma}
\label{lem_1}
In the setting of Definition~\ref{def_T}, for a positive constant $c$ and $\epsilon\in(0,1)$, the following concentration bound holds:
\begin{equation}
\label{c1}
\mathbb{P}\left(\sup_{\mbX\in\mathcal{K}^r}\left|T\left(\mbX\right)-\frac{\zeta}{2}-\frac{\|\mbX\|_{\mathrm{F}}^2}{2\zeta n_1n_2}\right|\leq\epsilon\right)\geq 1-3e^{-c\epsilon^2m},
\end{equation}
provided that the number of observed samples satisfies
\begin{equation}
\label{c2}
m\gtrsim\epsilon^{-2}r(n_1+n_2)\log\left(1+\epsilon^{-1}+\|\mathcal{K}^r\|\right).
\end{equation}
\end{lemma}
\begin{IEEEproof}
For simplicity of notation, denote $d_{i,j}$ in Definition~\ref{def_T} by $d=\left|X-\tau\right|$. Following $\zeta\geq\alpha$, we can then write
\begin{equation}
\label{c3}
\begin{aligned}
\mathbb{E}_{\tau}d&=\frac{1}{2\zeta}\int_{-\zeta}^{\zeta}\left|X-\tau\right|\,d\tau\\&=\frac{1}{2\zeta}\left[\int_{-\zeta}^{X}X-\tau\,d\tau+\int_{X}^{\zeta}\tau-X\,d\tau\right]=\frac{\zeta}{2}+\frac{X^2}{2\zeta}.
\end{aligned}
\end{equation}
Therefore, we have
\begin{equation}
\label{c4}
\begin{aligned}
\mathbb{E}_{\tau}T\left(\mbX\right)&=\frac{1}{m}\sum_{(i,j)\in\Omega}\frac{\zeta}{2}+\frac{X_{i,j}^2}{2\zeta}\\&=\frac{\zeta}{2}+\frac{\|\mathcal{P}_{\Omega}\left(\mbX\right)\|_{\mathrm{F}}^2}{2\zeta m}.
\end{aligned}
\end{equation}
Computing the expected value of \eqref{c4} respect to the randomness of $(i,j)\in\Omega$ leads to
\begin{equation}
\label{c5}
\mathbb{E}_{\tau,(i,j)}T\left(\mbX\right)=\frac{\zeta}{2}+\frac{\|\mbX\|_{\mathrm{F}}^2}{2\zeta n_1n_2}.
\end{equation}
Note that for each random variable $d_{i,j}$, we have $0\leq d_{i,j}\leq\alpha+\zeta$. Then, by applying Lemma~\ref{hoff}, there exists a universal constant $c_1>0$ such that for a fixed matrix $\mbX\in\mathcal{K}^r$, the following concentration bound holds:
\begin{equation}
\label{c6}
\mathbb{P}\left(\left|T\left(\mbX\right)-\frac{\zeta}{2}-\frac{\|\mbX\|_{\mathrm{F}}^2}{2\zeta n_1n_2}\right|\geq t\right)\leq 2e^{-c_1t^2m}.
\end{equation}
To extend the result in \eqref{c6} to hold uniformly for all $\mbX\in\mathcal{K}^r$, we employ a standard covering argument. Specifically, we first approximate the set $\mathcal{K}^r$ by constructing a $\rho$-net $\mathcal{K}^r_\rho$ of $\mathcal{K}^r$. Then, leveraging the concentration inequality in \eqref{c6}, we apply a union bound over all matrices $\mbX\in\mathcal{K}^r_\rho$ to obtain a uniform guarantee. For the approximation step, by construction of the $\rho$-net $\mathcal{K}^r_\rho$ of $\mathcal{K}^r$, for any $\mbX\in\mathcal{K}^r$ there exists a matrix $\mbX^\prime\in\mathcal{K}^r_\rho$ such that $\left\|\mbX-\mbX^{\prime}\right\|_{\mathrm{F}}\leq\rho$. This allows us to express the following
\begin{equation}
\label{c7}
\begin{aligned}
\left|T\left(\mbX\right)-\frac{\zeta}{2}-\frac{\|\mbX\|_{\mathrm{F}}^2}{2\zeta n_1n_2}\right|&\leq\left|T\left(\mbX^{\prime}\right)-\frac{\zeta}{2}-\frac{\|\mbX^{\prime}\|_{\mathrm{F}}^2}{2\zeta n_1n_2}\right|+\left|T\left(\mbX\right)-T\left(\mbX^{\prime}\right)\right|+\frac{1}{2\zeta n_1n_2}\left|\|\mbX^{\prime}\|_{\mathrm{F}}^2-\|\mbX\|_{\mathrm{F}}^2\right|\\&\leq\left|T\left(\mbX^{\prime}\right)-\frac{\zeta}{2}-\frac{\|\mbX^{\prime}\|_{\mathrm{F}}^2}{2\zeta n_1n_2}\right|+\left|T\left(\mbX\right)-T\left(\mbX^{\prime}\right)\right|+\frac{1}{2\zeta n_1n_2}\left\|\mbX-\mbX^\prime\right\|_{\mathrm{F}}\left[\left\|\mbX\right\|_{\mathrm{F}}+\left\|\mbX^\prime\right\|_{\mathrm{F}}\right]\\&\leq\left|T\left(\mbX^{\prime}\right)-\frac{\zeta}{2}-\frac{\|\mbX^{\prime}\|_{\mathrm{F}}^2}{2\zeta n_1n_2}\right|+\underbrace{\left|T\left(\mbX\right)-T\left(\mbX^{\prime}\right)\right|}_{\text{Term}\hspace{1pt}\mathrm{\rom{1}}}+\frac{\alpha}{\zeta}\frac{\rho}{\sqrt{n_1n_2}}.
\end{aligned}
\end{equation}
Based on the definition of the mapping $T(\cdot)$ in \eqref{def_T}, we can upper bound $\text{Term}\hspace{1pt}\mathrm{\rom{1}}$ in \eqref{c7} as follows:
\begin{equation}
\label{c8}
\begin{aligned}
\text{Term}\hspace{1pt}\mathrm{\rom{1}}&=\frac{1}{m}\left|\sum_{(i,j)\in\Omega}\left[\left|X_{i,j}-\tau_{i,j}\right|-\left|X^{\prime}_{i,j}-\tau_{i,j}\right|\right]\right|\\&\leq\frac{1}{m}\sum_{(i,j)\in\Omega}\left|\left|X_{i,j}-\tau_{i,j}\right|-\left|X^{\prime}_{i,j}-\tau_{i,j}\right|\right|\\&\leq\underbrace{\frac{1}{m}\sum_{(i,j)\in\Omega}\left|X_{i,j}-X^{\prime}_{i,j}\right|}_{\text{Term}\hspace{1pt}\mathrm{\rom{2}}}.
\end{aligned}
\end{equation}
To upper bound $\text{Term}\hspace{1pt}\mathrm{\rom{2}}$ with high probability, we employ the following lemma:
\begin{lemma}
\label{lem_2}
Consider the set 
\begin{equation}
\label{c9}
\mathcal{X}^r=\left\{\mbU\in\mathbb{R}^{n_1\times n_2}\mid\operatorname{rank}\left(\mbU\right)\leq r,\left\|\mbU\right\|_{\mathrm{F}}\leq1\right\}.
\end{equation}
Define $K(\mbU)=\frac{1}{m}\left\|\mathcal{P}_{\Omega}\left(\mbU\right)\right\|_1$. Then, for some positive universal constant $c_2$ and $\delta\in(0,1)$, the following universal bound holds:
\begin{equation}
\label{c10}
\mathbb{P}\left(\sup_{\mbU\in\mathcal{X}^r}\left[K(\mbU)-\frac{\left\|\mbU\right\|_1}{n_1n_2}\right]\geq\delta\left\|\mbU\right\|_{\mathrm{F}}\right)\leq e^{-c_2\delta^2m},
\end{equation}
provided that the number of samples satisfies
\begin{equation}
\label{c11}
m\gtrsim\delta^{-2}r(n_1+n_2)\log\left(1+\left\|\mathcal{X}^r\right\|\left(1+\frac{\delta^{-1}}{\sqrt{n_1n_2}}\right)\right).
\end{equation}
\end{lemma}
The proof of Lemma~\ref{lem_2} is provided in Appendix~\ref{ape5}. Let $\mbE=\frac{\mbX-\mbX^\prime}{\rho}$. Since $\mbX-\mbX^\prime\in\left(\mathcal{K}^{r}-\mathcal{K}^r\right)\bigcap \rho \mathbb{B}_{\mathrm{F}}^{n_1\times n_2}$, it follows that $\mbE\in\mathcal{X}^{2r}$. By applying Lemma~\ref{lem_2}, we obtain the following upper bound for $\text{Term}\hspace{1pt}\mathrm{\rom{2}}$ with failure probability of at most $e^{-c_2\delta^2m}$.
\begin{equation}
\label{c12}
\begin{aligned}
\text{Term}\hspace{1pt}\mathrm{\rom{2}}&=\frac{\rho}{m}\sum_{(i,j)\in\Omega}\left|\frac{X_{i,j}-X^{\prime}_{i,j}}{\rho}\right|\\&=\frac{\rho}{m}\sum_{(i,j)\in\Omega}\left|E_{i,j}\right|\\&\leq\rho\left(\frac{\left\|\mbE\right\|_1}{n_1n_2}+\delta\left\|\mbE\right\|_{\mathrm{F}}\right)\\&\leq\frac{\left\|\mbX-\mbX^\prime\right\|_{\mathrm{F}}}{\sqrt{n_1n_2}}+\delta\left\|\mbX-\mbX^\prime\right\|_{\mathrm{F}}\\&\leq\left(\delta+\frac{1}{\sqrt{n_1n_2}}\right)\rho,
\end{aligned}
\end{equation}
if the number of samples $m$ satisfies \eqref{c11}. By combining \eqref{c12} with the result in \eqref{c7}, we obtain
\begin{equation}
\label{c13}
\left|T\left(\mbX\right)-\frac{\zeta}{2}-\frac{\|\mbX\|_{\mathrm{F}}^2}{2\zeta n_1n_2}\right|\leq\left|T\left(\mbX^{\prime}\right)-\frac{\zeta}{2}-\frac{\|\mbX^{\prime}\|_{\mathrm{F}}^2}{2\zeta n_1n_2}\right|+\left(\delta+\frac{2}{\sqrt{n_1n_2}}\right)\rho,
\end{equation}
where we have used the fact that $\zeta\geq\alpha$. Taking the supremum over all $\mbX\in\mathcal{K}^r$ then leads to
\begin{equation}
\label{c14}
\sup_{\mbX\in\mathcal{K}^r}\left|T\left(\mbX\right)-\frac{\zeta}{2}-\frac{\|\mbX\|_{\mathrm{F}}^2}{2\zeta n_1n_2}\right|\leq\underbrace{\sup_{\mbX^\prime\in\mathcal{K}^r_\rho}\left|T\left(\mbX^{\prime}\right)-\frac{\zeta}{2}-\frac{\|\mbX^{\prime}\|_{\mathrm{F}}^2}{2\zeta n_1n_2}\right|}_{\text{Term}\hspace{1pt}\mathrm{\rom{3}}}+\left(\delta+\frac{2}{\sqrt{n_1n_2}}\right)\rho.
\end{equation}
To control $\text{Term}\hspace{1pt}\mathrm{\rom{3}}$, we apply the concentration bound from \eqref{c6} via a union bound:
\begin{equation}
\label{c15}
\mathbb{P}\left(\sup_{\mbX^\prime\in\mathcal{K}^r_\rho}\left|T\left(\mbX^\prime\right)-\frac{\zeta}{2}-\frac{\|\mbX^\prime\|_{\mathrm{F}}^2}{2\zeta n_1n_2}\right|\geq t\right)\leq 2e^{\mathcal{H}\left(\mathcal{K}^r,\rho\right)-c_1t^2m}.
\end{equation}
We can equivalently rewrite the above expression so that the concentration inequality holds with a failure probability of at most $2e^{-c_3t^2m}$, for some universal constant $c_3>0$, provided that $m\gtrsim t^{-2}\mathcal{H}\left(\mathcal{K}^r,\rho\right)$. Combining this result with \eqref{c14}, we obtain that the following bound holds with probability at least $1-2e^{-c_3t^2m}-e^{-c_2\delta^2m}$
\begin{equation}
\label{c16}
\sup_{\mbX\in\mathcal{K}^r}\left|T\left(\mbX\right)-\frac{\zeta}{2}-\frac{\|\mbX\|_{\mathrm{F}}^2}{2\zeta n_1n_2}\right|\leq t+\left(\delta+\frac{2}{\sqrt{n_1n_2}}\right)\rho,
\end{equation}
if the number of samples $m$ satisfies
\begin{equation}
\label{c17}
m\gtrsim\max\left(\delta^{-2}r(n_1+n_2)\log\left(1+\left\|\mathcal{X}^r\right\|\left(1+\frac{\delta^{-1}}{\sqrt{n_1n_2}}\right)\right),t^{-2}r(n_1+n_2)\log\left(1+\frac{\left\|\mathcal{K}^r\right\|}{\rho}\right)\right).
\end{equation}
To derive \eqref{c17}, we used the upper bound of $\mathcal{H}\left(\mathcal{K}^r,\rho\right)$ given in \eqref{kol_bound}. To obtain the desired result in \eqref{c1}, we should set $\epsilon=t+\left(\delta+\frac{2}{\sqrt{n_1n_2}}\right)\rho$. Accordingly, to complete the proof, we set $t=\frac{\epsilon}{2}$, $\delta=\frac{\epsilon}{2}$, and $\rho=\frac{\epsilon\sqrt{n_1n_2}}{\epsilon\sqrt{n_1n_2}+4}$. With this selection, the result of Lemma~\ref{lem_2} follows, with the required sample complexity $m$ and probability at least $1-3e^{-c\epsilon^2m}$ for some universal constant $c>0$.
\end{IEEEproof}
Based on the result of Lemma~\ref{lem_2}, we are now ready to prove Theorem~\ref{thr_onebit}. Let $\mbZ=\frac{1}{2}\left(\mbX+\mbY\right)$ for any $\mbX,\mbY\in\mathcal{K}^r$ that satisfy the consistent reconstruction property in Definition~\ref{def_2}. By the definition of $\mbZ$, for every $(i,j)\in\Omega$, we can write $Z_{i,j}-\tau_{i,j}=\frac{1}{2}\left(X_{i,j}-\tau_{i,j}+Y_{i,j}-\tau_{i,j}\right)$. Since $\mbX$ and $\mbY$ belong to the same consistency cell, it follows that
\begin{equation}
\label{c18}
\left|Z_{i,j}-\tau_{i,j}\right|=\frac{1}{2}\left(\left|X_{i,j}-\tau_{i,j}\right|+\left|Y_{i,j}-\tau_{i,j}\right|\right),~(i,j)\in\Omega.
\end{equation}
Taking the empirical average of both sides of \eqref{c18} yields
\begin{equation}
\label{c19}
T\left(\mbZ\right)=\frac{1}{2}\left[T\left(\mbX\right)+T\left(\mbY\right)\right].
\end{equation}
Since $\mbZ\in\frac{1}{2}\left(\mathcal{K}^r+\mathcal{K}^r\right)$, it follows that $\operatorname{rank}(\mbZ)\leq2r$ and $\|\mbZ\|_{\mathrm{max}}\leq\alpha$.
By Lemma~\ref{lem_1}, if the sampling condition in \eqref{c2} holds, then with failure probability at most $3e^{-c\epsilon^2m}$, we have
\begin{equation}
\label{c20}
\frac{\|\mbZ\|_{\mathrm{F}}^2}{2\zeta n_1n_2}\geq T(\mbZ)-\frac{\zeta}{2}-\epsilon.
\end{equation}
Combining \eqref{c19} and \eqref{c20} gives
\begin{equation}
\label{c21}
\begin{aligned}
\frac{\|\mbZ\|_{\mathrm{F}}^2}{2\zeta n_1n_2}&\geq\frac{1}{2}\left[T(\mbX)+T(\mbY)\right]-\frac{\zeta}{2}-\epsilon\\&\geq\frac{1}{2}\left[\frac{\|\mbX\|_{\mathrm{F}}^2}{2\zeta n_1n_2}+\frac{\zeta}{2}-\epsilon+\frac{\|\mbY\|_{\mathrm{F}}^2}{2\zeta n_1n_2}+\frac{\zeta}{2}-\epsilon\right]-\frac{\zeta}{2}-\epsilon\\&=\frac{1}{4\zeta n_1n_2}\left[\|\mbX\|_{\mathrm{F}}^2+\|\mbY\|_{\mathrm{F}}^2\right]-2\epsilon.
\end{aligned}
\end{equation}
By substituting $\mbZ=\frac{1}{2}\left(\mbX+\mbY\right)$ into \eqref{c21}, we can express the result in terms of $\mbX$ and $\mbY$:
\begin{equation}
\label{c22}
\|\mbX+\mbY\|_{\mathrm{F}}^2\geq2\left(\|\mbX\|_{\mathrm{F}}^2+\|\mbY\|_{\mathrm{F}}^2\right)-16\epsilon\zeta n_1n_2.
\end{equation}
Finally, applying the parallelogram law yields
\begin{equation}
\label{c23}
\begin{aligned}
\|\mbX-\mbY\|_{\mathrm{F}}^2&=2\left[\|\mbX\|_{\mathrm{F}}^2+\|\mbY\|_{\mathrm{F}}^2\right]-\|\mbX+\mbY\|_{\mathrm{F}}^2\\&\leq16\epsilon\zeta n_1n_2.
\end{aligned}
\end{equation}
This completes the proof of Theorem~\ref{thr_onebit}.

\section{Proof of Theorem~\ref{thr_2}}
\label{ape4}
Let $\mbZ=\frac{1}{2}(\mbX+\mbY)$ for any $\mbX,\mbY\in\mathcal{K}^r$ that do not necessarily satisfy the consistent reconstruction property in Definition~\ref{def_2}. By definition, for every $(i,j)\in\Omega$, we have $Z_{i,j}-\tau_{i,j}=\frac{1}{2}\left(X_{i,j}-\tau_{i,j}+Y_{i,j}-\tau_{i,j}\right)$. Define the set
\begin{equation}
\label{p_1}
\mathcal{G}=\left\{(i,j)\in\Omega:~\operatorname{sgn}\left(X_{i,j}-\tau_{i,j}\right)\neq\operatorname{sgn}\left(Y_{i,j}-\tau_{i,j}\right)\right\}.
\end{equation}
For all $(i,j)\in\Omega\setminus\mathcal{G}$, i.e., indices where $\mbX$ and $\mbY$ are consistent in sign, we have
\begin{equation}
\label{p_2}
\left|Z_{i,j}-\tau_{i,j}\right|=\frac{1}{2}\left(\left|X_{i,j}-\tau_{i,j}\right|+\left|Y_{i,j}-\tau_{i,j}\right|\right),
\end{equation}
which follows directly from the consistent reconstruction property. Conversely, for all $(i,j)\in\mathcal{G}$, where $\mbX$ and $\mbY$ do not have the same sign, it can be readily verified that
\begin{equation}
\label{p_3}
\begin{aligned}
\left|Z_{i,j}-\tau_{i,j}\right|&=\frac{1}{2}\left(\left|X_{i,j}-\tau_{i,j}\right|+\left|Y_{i,j}-\tau_{i,j}\right|\right)\\&-\min\left(\left|X_{i,j}-\tau_{i,j}\right|,\left|Y_{i,j}-\tau_{i,j}\right|\right).
\end{aligned}
\end{equation}
Taking the empirical average over all $(i,j)\in\Omega$ yields
\begin{equation}
\label{p_4}
T(\mbZ)=\frac{1}{2}\left[T(\mbX)+T(\mbY)\right]-R,
\end{equation}
where
\begin{equation}
\label{p_5}
R=\frac{1}{m}\sum_{(i,j)\in\mathcal{G}}\min\left(\left|X_{i,j}-\tau_{i,j}\right|,\left|Y_{i,j}-\tau_{i,j}\right|\right).
\end{equation}
According to Lemma~\ref{lem_1}, if the sampling condition in \eqref{c2} holds, then with failure probability at most $3e^{-c\epsilon^2m}$ we have
\begin{equation}
\label{p_6}
\frac{\|\mbZ\|_{\mathrm{F}}^2}{2\zeta n_1n_2}\geq T(\mbZ)-\frac{\zeta}{2}-\epsilon,
\end{equation}
which together with \eqref{p_4} results in
\begin{equation}
\label{p_7}
\begin{aligned}
\frac{\|\mbZ\|_{\mathrm{F}}^2}{2\zeta n_1n_2}&\geq\frac{1}{2}\left[T(\mbX)+T(\mbY)\right]-R-\frac{\zeta}{2}-\epsilon\\&\geq\frac{1}{2}\left[\frac{\|\mbX\|_{\mathrm{F}}^2}{2\zeta n_1n_2}+\frac{\zeta}{2}-\epsilon+\frac{\|\mbY\|_{\mathrm{F}}^2}{2\zeta n_1n_2}+\frac{\zeta}{2}-\epsilon\right]-R-\frac{\zeta}{2}-\epsilon\\&=\frac{1}{4\zeta n_1n_2}\left[\|\mbX\|_{\mathrm{F}}^2+\|\mbY\|_{\mathrm{F}}^2\right]-R-2\epsilon.
\end{aligned}
\end{equation}
Based on the definition of $\mbZ$ and the parallelogram law, we obtain
\begin{equation}
\label{p_8}
\|\mbX-\mbY\|_{\mathrm{F}}^2\leq8\zeta n_1n_2R+16\zeta n_1n_2\epsilon.
\end{equation}
The value of $R$ can be bounded as
\begin{equation}
\label{p_9}
R\leq\frac{|\mathcal{G}|}{m}2\zeta,
\end{equation}
which, when substituted into \eqref{p_8}, yields
\begin{equation}
\label{p_10}
\|\mbX-\mbY\|_{\mathrm{F}}\leq4\sqrt{\epsilon\zeta n_1n_2+\zeta^2\frac{|\mathcal{G}|}{m}n_1n_2},
\end{equation}
thereby completing the proof.

\section{Proof of Lemma~\ref{lem_2}}
\label{ape5}
It is easy to observe that 
\begin{equation}
\label{d1}
\mathbb{E}_{(i,j)}K(\mbU)=\frac{1}{m}\mathbb{E}_{(i,j)}\left\|\mathcal{P}_{\Omega}\left(\mbU\right)\right\|_1=\frac{\left\|\mbU\right\|_1}{n_1n_2}.
\end{equation}
Since $\mbU\in\mathcal{X}^r$, then each $U_{i,j}$ is a subgaussian random variable. Then, for some universal constant $c_1$, we can provide the following one-sided concentration bound for a fixed $\mbU\in\mathcal{X}^r$ following Lemma~\ref{hoff}:
\begin{equation}
\label{d2}
\mathbb{P}\left(K(\mbU)-\frac{\left\|\mbU\right\|_1}{n_1n_2}\geq t\left\|\mbU\right\|_{\mathrm{F}}\right)\leq e^{-c_1t^2m}.
\end{equation}
To extend this result to all $\mbU \in \mathcal{X}^r$, we follow a similar argument to that used in the proof of Lemma~\ref{lem_1}. Specifically, we first approximate the set $\mathcal{X}^r$ by constructing a $\rho$-net $\mathcal{K}^r_\rho$. Then, leveraging the concentration bound in \eqref{d2}, we apply a union bound over all matrices $\mbU \in \mathcal{X}^r_\rho$. In the approximation step, for any $\mbU \in \mathcal{X}^r$, there exists a matrix $\mbU^\prime \in \mathcal{K}^r_\rho$ such that $\left\|\mbU - \mbU^\prime\right\|_{\mathrm{F}} \leq \rho$. This enables us to write the following:
\begin{equation}
\label{d3}
\begin{aligned}
K(\mbU)-\frac{\left\|\mbU\right\|_1}{n_1n_2}&=\left(K(\mbU^\prime)-\frac{\left\|\mbU^\prime\right\|_1}{n_1n_2}\right)+\left(K(\mbU)-K(\mbU^\prime)\right)+\frac{1}{n_1n_2}\left(\left\|\mbU^\prime\right\|_1-\left\|\mbU\right\|_1\right)\\&\leq\left(K(\mbU^\prime)-\frac{\left\|\mbU^\prime\right\|_1}{n_1n_2}\right)+\left(K(\mbU)-K(\mbU^\prime)\right)+\frac{1}{n_1n_2}\left\|\mbU-\mbU^\prime\right\|_1\\&\leq\left(K(\mbU^\prime)-\frac{\left\|\mbU^\prime\right\|_1}{n_1n_2}\right)+\underbrace{\left(K(\mbU)-K(\mbU^\prime)\right)}_{\text{Term}\hspace{1pt}(a)}+\frac{\rho}{\sqrt{n_1n_2}}.
\end{aligned}
\end{equation}
Based on the definition of $K(\cdot)$, we can upper bound $\text{Term}\hspace{1pt}(a)$ in \eqref{d3} as follows:
\begin{equation}
\label{d4}
\begin{aligned}
\text{Term}\hspace{1pt}(a)&=\frac{1}{m}\sum_{(i,j)\in\Omega}\left(\left|U_{i,j}\right|-\left|U_{i,j}^\prime\right|\right)\\&\leq\frac{1}{m}\sum_{(i,j)\in\Omega}\left|U_{i,j}-U_{i,j}^\prime\right|\\&=K(\mbU-\mbU^\prime).
\end{aligned}
\end{equation}
Let us introduce the smallest constant $\delta>0$ such that
\begin{equation}
\label{d5}
K(\mbU)-\frac{\left\|\mbU\right\|_1}{n_1n_2}\leq\delta\left\|\mbU\right\|_{\mathrm{F}},~\forall\mbU\in\mathbb{R}^{n_1\times n_2}~\text{with}~\operatorname{rank}\left(\mbU\right)\leq r.
\end{equation}
Since $\mbU-\mbU^\prime$ has rank at most $2r$, it can be written as $\mbU-\mbU^\prime=\mbM+\mbN$, where both $\mbM,\mbN\in\mathbb{R}^{n_1\times n_2}$ have rank at most $r$, and $\left\|\mbU-\mbU^\prime\right\|_{\mathrm{F}}^2=\left\|\mbM\right\|_{\mathrm{F}}^2+\left\|\mbN\right\|_{\mathrm{F}}^2$. Then
\begin{equation}
\label{d6}
\begin{aligned}
\text{Term}\hspace{1pt}(a)&\leq K(\mbU-\mbU^\prime)\\&=K(\mbM+\mbN)\\&\leq K(\mbM)+K(\mbN)\\&\leq\frac{\left\|\mbM\right\|_1}{n_1n_2}+\delta\left\|\mbM\right\|_{\mathrm{F}}+\frac{\left\|\mbN\right\|_1}{n_1n_2}+\delta\left\|\mbN\right\|_{\mathrm{F}}\\&\leq\frac{\left\|\mbM\right\|_{\mathrm{F}}+\left\|\mbN\right\|_{\mathrm{F}}}{\sqrt{n_1n_2}}+\delta\left(\left\|\mbM\right\|_{\mathrm{F}}+\left\|\mbN\right\|_{\mathrm{F}}\right)\\&\leq\sqrt{\frac{2}{n_1n_2}}\sqrt{\left\|\mbM\right\|_{\mathrm{F}}^2+\left\|\mbN\right\|_{\mathrm{F}}^2}+\sqrt{2}\delta\sqrt{\left\|\mbM\right\|_{\mathrm{F}}^2+\left\|\mbN\right\|_{\mathrm{F}}^2}\\&=\sqrt{\frac{2}{n_1n_2}}\left\|\mbU-\mbU^\prime\right\|_{\mathrm{F}}+\sqrt{2}\delta\left\|\mbU-\mbU^\prime\right\|_{\mathrm{F}}\\&\leq\sqrt{\frac{2}{n_1n_2}}\rho+\sqrt{2}\delta\rho.
\end{aligned}
\end{equation}
By combining \eqref{d6} with the result in \eqref{d3}, we obtain
\begin{equation}
\label{d7}
K(\mbU)-\frac{\left\|\mbU\right\|_1}{n_1n_2}\leq\left(K(\mbU^\prime)-\frac{\left\|\mbU^\prime\right\|_1}{n_1n_2}\right)+\frac{1+\sqrt{2}}{\sqrt{n_1n_2}}\rho+\sqrt{2}\delta\rho.
\end{equation}
Taking the supremum over all $\mbU\in\mathcal{X}^r$ then yields
\begin{equation}
\label{d8}
\sup_{\mbU\in\mathcal{X}^r}\left[K(\mbU)-\frac{\left\|\mbU\right\|_1}{n_1n_2}\right]\leq\underbrace{\sup_{\mbU^\prime\in\mathcal{X}^r_\rho}\left[K(\mbU^\prime)-\frac{\left\|\mbU^\prime\right\|_1}{n_1n_2}\right]}_{\text{Term}\hspace{1pt}(b)}+\frac{1+\sqrt{2}}{\sqrt{n_1n_2}}\rho+\sqrt{2}\delta\rho.
\end{equation}
To control $\text{Term}\hspace{1pt}(b)$, we invoke the concentration inequality in \eqref{d2} together with a union bound,
\begin{equation}
\label{d9}
\mathbb{P}\left(\sup_{\mbU^\prime\in\mathcal{X}^r_\rho}\left[K(\mbU^\prime)-\frac{\left\|\mbU^\prime\right\|_1}{n_1n_2}\right]\geq t\left\|\mbU^\prime\right\|_{\mathrm{F}}\right)\leq e^{\mathcal{H}\left(\mathcal{X}^r,\rho\right)-c_1t^2m}.
\end{equation}
Combining this result with \eqref{d8}, we obtain that the following bound holds with failure probability at most $2e^{-c_2t^2m}$, for some universal constant $c_2$:
\begin{equation}
\label{d10}
\sup_{\mbU\in\mathcal{X}^r}\left[K(\mbU)-\frac{\left\|\mbU\right\|_1}{n_1n_2}\right]\leq t+\frac{1+\sqrt{2}}{\sqrt{n_1n_2}}\rho+\sqrt{2}\delta\rho,
\end{equation}
provided that $m\gtrsim t^{-2}\mathcal{H}\left(\mathcal{X}^r,\rho\right)$. Following our assumption in \eqref{d5}, it must then hold that
\begin{equation}
\label{d11}
\delta\leq t+\frac{1+\sqrt{2}}{\sqrt{n_1n_2}}\rho+\sqrt{2}\delta\rho.
\end{equation}
Finally, by setting $t=\frac{\delta}{2}$ and $\rho=\frac{\delta/2\sqrt{n_1n_2}}{1+\sqrt{2}+\sqrt{2}\delta\sqrt{n_1n_2}}$, the proof is complete.

\bibliographystyle{IEEEtran}
\bibliography{references}

\end{document}